\documentclass[11pt]{article}

\usepackage[margin=1.125in]{geometry}
\usepackage{xcolor}
\definecolor{DarkGreen}{rgb}{0.1,0.5,0.1}
\definecolor{DarkRed}{rgb}{0.5,0.1,0.1}
\definecolor{DarkBlue}{rgb}{0.1,0.1,0.5}

\usepackage[small]{caption}
\usepackage[pdftex]{hyperref}
\hypersetup{
    unicode=false,         
    pdftoolbar=true,       
    pdfmenubar=true,       
    pdffitwindow=false,     
    pdfnewwindow=true,     
    colorlinks=true,      
    linkcolor=DarkBlue,         
    citecolor=DarkGreen,       
    filecolor=DarkGreen,     
    urlcolor=DarkBlue,         
    pdftitle={},
    pdfauthor={},    
}
\usepackage{amssymb,amsmath,amsthm,amsfonts,enumitem,bm}
\usepackage{fullpage,nicefrac,comment}
\usepackage{tikz}
\usetikzlibrary{arrows,decorations.pathmorphing,decorations.shapes,snakes,patterns,positioning}
\usepackage[ruled,linesnumbered]{algorithm2e}

\newcommand{\jmin}{{j_{\min}}}
\newcommand{\jmax}{{j_{\max}}}
\newcommand{\ellmin}{\ell_{\min}}
\newcommand{\ellmax}{\ell_{\max}}
\newcommand{\lmin}{\ell_{\min}}
\newcommand{\lmax}{\ell_{\max}}
\newcommand{\dr}{d^{(r)}}
\newcommand{\yr}{y^{(r)}}
\newcommand{\lmaxr}{\ell_{\max}^{(r)}}
\newcommand{\lminr}{\ell_{\min}^{(r)}}
\newcommand{\jmaxr}{j_{\max}^{(r)}}
\newcommand{\jminr}{j_{\min}^{(r)}}

\newcommand{\cI}{\ensuremath{\mathcal{I}}}
\newcommand{\cC}{\ensuremath{\mathcal{C}}}
\newcommand{\cF}{\ensuremath{\mathcal{F}}}
\newcommand{\cV}{\ensuremath{\mathcal{V}}}
\newcommand{\cW}{\ensuremath{\mathcal{W}}}

\newcommand{\cP}{\ensuremath{\mathcal{P}}}

\newcommand{\RR}{{\mathbb R}}
\newcommand{\Z}{{\mathbb Z}}

\newcommand{\F}{{\mathbb F}}
\newcommand{\B}{{\mathbb B}}

\newcommand{\bb}{\mathbf{b}}
\newcommand{\bff}{\mathbf{f}}
\newcommand{\bg}{\mathbf{g}} 
\newcommand{\bv}{\mathbf{v}}
\newcommand{\bw}{\mathbf{w}}
\newcommand{\bu}{\mathbf{u}}
\newcommand{\bc}{\mathbf{c}}
\newcommand{\bp}{\mathbf{p}}

\newcommand{\bx}{\mathbf{x}}
\newcommand{\by}{\mathbf{y}}
\newcommand{\bz}{\mathbf{z}}
\newcommand{\bX}{\mathbf{X}}

\newcommand{\bG}{\ensuremath{\mathbf{G}}}
\newcommand{\btheta}{\bm{\theta}}

\newcommand{\inabs}[1]{\left|#1\right|}

\newcommand{\rflor}{\right\rfloor}
\newcommand{\lflor}{\left\lfloor}
\newcommand{\inset}[1]{\left\{#1\right\}}

\newcommand{\inparen}[1]{\left(#1\right)}

\newcommand{\inbrak}[1]{\left[#1\right]}
\newcommand{\suchthat}{\,:\,}

\newcommand{\degset}{\mathrm{degSet}}
\newcommand{\modstar}{\ \ \mathrm{mod}^*\ }
\newcommand{\supp}{\mathrm{Supp}}

\newcommand{\spn}{\ensuremath{\operatorname{span}}}

\newcommand{\tr}{\mathrm{tr}}

\newcommand{\ind}[1]{\ensuremath{\mathbf{1}_{#1}}}

\newcommand{\eps}{\varepsilon}
\renewcommand{\epsilon}{\varepsilon}
\newcommand{\vphi}{\varphi}

\newtheorem{theorem}{Theorem} 
\newtheorem{lemma}[theorem]{Lemma} 
\newtheorem{definition}[theorem]{Definition}

\newtheorem{observation}[theorem]{Observation}

\newtheorem{corollary}[theorem]{Corollary} 

\newtheorem{remark}{Remark}
\newtheorem{claim}[theorem]{Claim}
\newtheorem{proposition}[theorem]{Proposition}

\newtheorem{fact}[theorem]{Fact}

\usepackage{cancel}

\title{Low-bandwidth recovery of linear functions of Reed-Solomon-encoded data}
\author{Noah Shutty\thanks{Stanford University.  Email: noaj@stanford.edu. N.S. was supported in part by NSF DGE-1656518.} \ and Mary Wootters\thanks{Stanford University.  Email: marykw@stanford.edu.  Research partially supported by NSF Grants CCF-1844628 and CCF-BSF-1814629, and by a Sloan Research Fellowship.}}
\date{\today}
\begin{document}
\maketitle

\begin{abstract}
We study the problem of efficiently computing on encoded data.  More specifically, we study the question of \em low-bandwidth \em computation of functions $F:\F^k \to \F$ of some data $\bx \in \F^k$, given access to an encoding $\bc \in \F^n$ of $\bx$ under an error correcting code.  In our model---relevant in distributed storage, distributed computation and secret sharing---each symbol of $\bc$ is held by a different party, and we aim to minimize the total amount of information downloaded from each party in order to compute $F(\bx)$.  Special cases of this problem have arisen in several domains, and we believe that it is fruitful to study this problem in generality.

Our main result is a low-bandwidth scheme to compute linear functions for Reed-Solomon codes, even in the presence of erasures.  More precisely, let $\eps > 0$ and
let $\cC: \F^k \to \F^n$ be a full-length Reed-Solomon code of rate $1 - \eps$ over a field $\F$ with constant characteristic.
For any $\gamma \in [0, \eps)$, our scheme can compute any linear function $F(\bx)$ given access to any $(1 - \gamma)$-fraction of the symbols of $\cC(\bx)$, with download bandwidth $O(n/(\eps - \gamma))$ bits.  In contrast, the naive scheme that involves reconstructing the data $\bx$ and then computing $F(\bx)$ uses $\Theta(n \log n)$ bits.  Our scheme has applications in distributed storage, coded computation, and homomorphic secret sharing.
\end{abstract}

\section{Introduction}\label{sec:intro}
Suppose that we would like to store some data $\bx \in \F^k$ on a distributed storage system consisting of $n$ nodes, where $n \geq k$.  (Here and for the rest of the paper, $\F$ denotes some finite field).
Since node failure is a possibility, we may protect the data with an \emph{error correcting code} $\cC:\F^k \to \F^n$ as follows. 
We encode $\bx$ as a \emph{codeword} $\bc = \cC(\bx) \in \F^n$, and for $i=1,\ldots,n$, we send the symbol $c_i$ to the $i$'th storage node. 
If $\cC$ is a \emph{Maximum Distance Separable} (MDS) code---meaning that any $k$ symbols of the codeword $\bc$ are sufficient to recover the original data $\bx$---then the system can tolerate $n-k$ node failures without losing any of the original data.
Encoding with an MDS code (such as a Reed-Solomon code, see Definition~\ref{def:rs} below) is common in distributed storage: for example, Reed-Solomon codes are built into HDFS~\cite{HDFS} and Ceph~\cite{ceph}.

Given data $\bx \in \F^k$ encoded and stored with an MDS code as described above, suppose that we would like to compute a function $F(\bx)$ of the data, where $F:\F^k \to \F$.
One scheme (which we will refer to as the {\it naive scheme}) is to contact any $k$ of the nodes, download their data, recover $\bx$, and compute $F(\bx)$.
This requires downloading $k$ field symbols, or $k \log|\F|$ bits. We call the amount of downloaded information the {\it bandwidth} of the scheme.
Given that $F(\bx)$ is only one field symbol, or $\log|\F|$ bits, the naive scheme seems wasteful in terms of bandwidth.

Our motivating question is whether we can compute $F(\bx)$ with less bandwidth.
That is, \emph{when is it possible to do communication-efficient computation on top of encoded data?}

In this paper, we introduce a new notion, \emph{low-bandwidth function evaluation}, in order to make this question precise.  
Our main result is a low-bandwidth function evaluation scheme for the ubiquitous family of \emph{Reed-Solomon Codes}, and for the useful family of \emph{linear functions} $F:\F^k \to \F.$

\subsection{Low-Bandwidth Function Evaluation}\label{sec:problem}
A low-bandwidth evaluation scheme for a code $\cC$ and a collection of functions $\cF$ allows us to compute functions in $\cF$ in a communication-efficient way on data encoded with $\cC$, even when a set $\cI \subset [n]$ of symbols are unavailable (e.g., the corresponding nodes have failed).  More precisely, we have the following definition.
Below, and throughout the paper, we use bold letters like $\bc$ to denote vectors, and we use $c_i$ to denote the $i$'th entry of $\bc$.

\begin{definition}[Low-Bandwidth Function Evaluation]\label{def:repair}
Let $\cC:\F^k \to \F^n$ be a code.  Let $\cF$ be a class of functions $F: \F^k \to \F$.  
Let $b \geq 0$. 
We say that there is an
\emph{evaluation scheme for $\mathcal{F}$ and $\mathcal{C}$ with bandwidth $b$} if
 for any $F \in \cF$, there are:
\begin{itemize}
\item positive integers $b_1, \ldots, b_n \in \Z^{\geq 0}$ so that $\sum_j b_j \leq b$;
\item functions $g_1, \ldots, g_n$ so that $g_j: \F \to \{0,1\}^{b_j}$;
\item and a function $G: \{0,1\}^{\sum_j b_j} \to \F$ 
\end{itemize} 
so that for all $\bx \in \F^k$, if $\bc = \cC(\bx)$, then
\[ G(g_1(c_1), g_2(c_2), \ldots, g_n(c_n)) = F(\bx). \]
We denote the scheme by $\Phi:F \mapsto (g_1, g_2, \ldots, g_n, G)$ that maps $F \in \cF$ to the maps $g_j$ and $G$.

If there is a set $\cI \subset [n]$ so that $b_j = 0$ for all $F \in \cF$ and for all $j \in \cI$, we say that $\Phi$ \emph{tolerates failures in $\cI$}.
\end{definition}

\begin{remark}[More general alphabets]\label{rem:alpha}
More generally, one could define an evaluation scheme for codes $\cC:\Sigma_0^k \to \Sigma_1^n$ for arbitrary input/output alphabets.  In this paper, we focus on linear functions and MDS codes, so we state Definition~\ref{def:repair} with $\Sigma_0 = \Sigma_1 = \F$ being some finite field.
\end{remark}

\begin{remark}[Knowledge of $\mathcal{I}$]\label{rem:I}
We note that in Definition~\ref{def:repair}, the set of failed nodes $\mathcal{I}$ tolerated by a scheme is a property of that particular scheme; a stronger definition might demand that the same scheme tolerates \emph{any} set of failed nodes of a particular size.  In the distributed storage example above, this weaker definition means that the nodes may need to know which nodes have failed in order to decide which scheme to use.  This mirrors the set-up in regenerating codes, discussed below, where the identity of the (single) failed node is assumed to be known.
\end{remark}

Notions related to Definition~\ref{def:repair} have been studied before, for particular families of functions and/or particular codes.  We mention a few of these below, and discuss them more in Section~\ref{sec:apps} (Applications of our results) and Section~\ref{sec:related} (Related work).

\begin{itemize}
\item \textbf{Regenerating codes.}  In the model of distributed storage described above, there has been a great deal of work on \em regenerating codes, \em which aim to repair one node failure with low download bandwidth (see, e.g.,~\cite{dimakis_survey}).  This is a special case of Definition~\ref{def:repair} when $\cF$ is the family of functions $f_i(\bx) = \cC(\bx)_i$, for $i = 1, \ldots, n$, and where $\cI = \{i\}$.  We note that if the code is systematic, this allows us to recover the dictator functions $f_i(\bx) = x_i$.
\item \textbf{Gradient Coding.}  The goal of \em gradient coding \em is to speed up distributed gradient descent in the presence of \em stragglers\em, that is, compute nodes that may be slow or unresponsive~\cite{gradient_coding}.  In this model, the data is $\bX = (\bx^{(1)}, \ldots, \bx^{(k)})$ where each $\bx^{(i)} \in \RR^d$ for some $d$.  The data $\bX$ is distributed using a code among $n$ workers, so that worker $i$ receives $\bc^{(i)} \in \RR^{d'}$, for a codeword $\mathbf{C} = (\bc^{(1)}, \ldots, \bc^{(n)})$.
At each timestep the parameter server (PS) has an iterate $\btheta \in \RR^d$, which it broadcasts to the workers.
Each worker $i$ that has not failed computes a local function $g_i(\bc^{(i)})$ and returns it to the PS.  The PS then uses these messages to recover the gradient of some loss function, $\nabla \mathcal{L}(\mathbf{X};\btheta)$.  
One goal of gradient coding is to tolerate stragglers in any set $\cI$ of some fixed size, while minimizing the communication bandwidth from the workers to the PS (e.g., \cite{YA18}).
This can be cast as a special case of a strengthing of Definition~\ref{def:repair} (with different input/output alphabets as per Remark~\ref{rem:alpha} and which can tolerate any small set $\mathcal{I}$ of failed nodes as per Remark~\ref{rem:I}), where $\cF$ is the family of functions given by possible gradients: $\cF = \inset{ F_{\btheta} \suchthat \btheta \in \RR^d }$, where $F_{\btheta}(\bX) = \nabla \mathcal{L}(\mathbf{X};\btheta)$.
\item \textbf{Homomorphic Secret Sharing.}  In \emph{secret sharing}, a secret $s$ is shared among $n$ parties, so that some coalitions of parties can recover the secret while others learn nothing about it.
A classic example is \em Shamir's scheme, \em which is essentially a Reed-Solomon (RS) code: let $\cC$ be a RS code of dimension $k$ and length $n+1$.  To share a secret $s \in \F$, 
we pick a random codeword $\bc \in \cC$ such that $c_0 = s$, and we send $c_i$ to party $i$.  Any $k$ parties can recover $\bc$ and hence $s$, but any $k-1$ parties learn nothing about $s$.
In {\it single-client} \emph{Homomorphic Secret Sharing} (HSS) \cite{HSS0, HSS1,HSS2}, one additionally asks that the parties be able to locally compute messages $g_i(c_i)$ so that a referee can compute a function $f(s)$ of the secret from these messages.\footnote{Typically, $f(s)$ should be a sum of the messages, in which case the HSS scheme is called \em additive.\em}
The HSS property has applications in Private Information Retrieval and Secure Multiparty Computation (see, e.g., \cite{HSS1}).  In some applications, it is desirable that the messages $g_i(c_i)$ be short, in which case the HSS scheme is said to be \em compact. \em  Low-bandwidth function evaluation is related to (information-theoretic, not-necessarily-additive) compact HSS, in the sense that a
 low-bandwidth function evaluation scheme for a Reed-Solomon code $\cC$ and with $\mathcal{F} = \{F:\bx \mapsto f(\cC(\bx)_0)\}$ gives a single-client HSS protocol for Shamir's scheme; the bandwidth of the scheme corresponds to the compactness of the messages $g_i(c_i)$.  More generally, if $\cC$ represents a secret sharing scheme, then a low-bandwidth function evaluation scheme for $\cC$ yields a compact single-client HSS protocol for that scheme.  
\end{itemize}

Given the numerous places that notions related to Definition~\ref{def:repair} have appeared,
we believe it will be fruitful to study Definition~\ref{def:repair} in generality.  
In this paper we begin this general study by 
considering what is arguably most natural class of functions (after the indicator functions $f(\bx) = \cC(\bx)_i$ studied in regenerating codes): the class of linear functions.
Our main results, described next in Section~\ref{sec:results}, are low-bandwidth evaluation schemes for Reed-Solomon codes, for classes of linear functions.  In Section~\ref{sec:apps}, we mention several applications of our results.

\subsection{Our Results}\label{sec:results}
Our main results hold for Reed-Solomon codes, defined below.
\begin{definition}[Reed-Solomon Code]\label{def:rs}
Let $\mathbb{F}$ be a finite field and suppose $n \leq |\F|$.
Let $\alpha_1, \alpha_2, \ldots, \alpha_n \in \mathbb{F}$ be distinct evaluation points.
The \textbf{Reed-Solomon Code} (RS) of dimension $k$ and length $n$ with evaluation points $\alpha_1, \ldots, \alpha_n$ is the map $\cC:\F^k \to \F^n$ given by
\[ \cC(\bff) = (f(\alpha_1), \ldots, f(\alpha_n)), \]
where for $\bff \in \F^k$, we define $f(X) = \sum_{i=0}^{k-1} f_i X^i$.
\end{definition}

Our contributions are as follows.

\begin{enumerate}
\item \textbf{A framework for computing linear functions on RS-encoded data.}
We provide a framework for developing low-bandwidth evaluation schemes for RS codes and for families $\cF$ of linear functions over extension fields.  This begins with a general linear-algebraic charactization (similar to the characterization of \cite{GW17} for regenerating codes) that applies to \em any \em linear code.
However, we go beyond that, building on it to develop a framework for RS codes in particular.
The linear-algebraic characterization for any linear code is given in Section~\ref{sec:prelim}, and the framework for RS codes is given in Section~\ref{sec:framework}.
\item \textbf{Low-bandwidth schemes for computing any linear function on RS-encoded data, up to the Singleton bound.}
Our main theorem, Theorem~\ref{thm:main}, can be summarized/simplified as follows:
\begin{theorem}[Simplified; see Theorem~\ref{thm:main}]\label{thm:informal}
Let $\eps > \gamma > 0$.
There is some $q_0 = \Theta\inparen{\frac{1}{\eps - \gamma}}$ so that the following holds for sufficiently large $n$ and for any prime power $q \geq q_0$.

Let $Q = q^t$ for any $t \geq 2$. Let $n=Q$ and let $k = (1 - \eps)n$.  Let $\cC$ be the RS code over $\F = \F_Q$ of dimension $k$ and length $n$, with evaluation points all of $\F$.  Let $\cF$ be the set of all linear functions $F:\F^k \to \F$.  Let $\cI \subseteq [n]$ be any set with $|\cI| < \gamma n$.  Then there is an evaluation scheme for $\cC$ and $\cF$ that tolerates failures in $\cI$, with bandwidth (measured in bits) of
\[ b = O\inparen{ \frac{n \log(q)}{\eps - \gamma} }. \]
\end{theorem}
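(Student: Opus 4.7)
The plan is to apply the Reed--Solomon framework of Section~\ref{sec:framework}, which specializes the linear-algebraic characterization of Section~\ref{sec:prelim} to RS codes. View $\F = \F_Q$ as a degree-$t$ extension of $\F_q$ and fix an $\F_q$-basis $\beta_1,\ldots,\beta_t$ of $\F_Q$ with trace-dual basis $\delta_1,\ldots,\delta_t$, so that each output decomposes as $F(\bff) = \sum_{\ell=1}^{t}\beta_\ell\, \tr_{Q/q}(\delta_\ell F(\bff))$. It therefore suffices to recover each $\F_q$-valued functional $L_\ell(\bff) := \tr_{Q/q}(\delta_\ell F(\bff))$. The scheme will have each surviving node $j \in [n]\setminus\cI$ send the $\F_q$-projection of $c_j$ onto a small $\F_q$-subspace $V_j \subseteq \F_Q$ of dimension $d_j$, enabling the decoder to compute any trace $\tr_{Q/q}(u\, c_j)$ with $u \in V_j$. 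The total bandwidth is then $\log(q)\cdot\sum_{j\notin\cI} d_j$, so we aim for $\sum_j d_j = O(n/(\eps-\gamma))$.

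The construction exploits Reed--Solomon duality. Since $\cC$ is MDS and $|[n]\setminus\cI| > k$, any linear functional $F$ admits a representation $F(\bff) = \sum_{j} v_j c_j$ with $v_j = 0$ for $j \in \cI$, and the set of such $\bv$ forms an $\F_Q$-affine space of dimension $(\eps-\gamma)n$. For each $\ell \in [t]$, we choose a representation $\bv^{(\ell)}$ from this space and set $u_j^{(\ell)} := \delta_\ell v_j^{(\ell)}$ and $V_j := \spn_{\F_q}\inset{u_j^{(\ell)} : \ell \in [t]}$; then the aggregation identity $L_\ell(\bff) = \sum_j \tr_{Q/q}(u_j^{(\ell)} c_j)$ is automatic. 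Writing the difference of two such representations as a dual codeword vanishing on $\cI$, it corresponds to a polynomial $\prod_{i\in\cI}(X-\alpha_i)\cdot \tilde p_\ell(X)$ with $\deg \tilde p_\ell < (\eps-\gamma)n$; the framework then guides us to choose each $\tilde p_\ell$ from a family of linearized/subspace polynomials whose images on all of $\F_Q$ lie in a prescribed small $\F_q$-subspace $W \subseteq \F_Q$ of dimension $O(1/(\eps-\gamma))$. The hypothesis $q \geq q_0 = \Theta(1/(\eps-\gamma))$ is exactly what is needed for such polynomials to fit within the degree budget, and the constant-characteristic hypothesis controls the arithmetic of the Frobenius underlying these polynomials.

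The main obstacle is establishing the joint low-dimensional property: that the $\tilde p_\ell$'s can be chosen so that, at every surviving node $j$, the queries $\inset{u_j^{(\ell)} : \ell \in [t]}$ lie in a common $\F_q$-subspace of dimension $O(1/(\eps-\gamma))$, and that this works for every $F \in \cF$. This is a rank/spanning question over $\F_q$ that the plan is to settle by combining the MDS property of RS codes (which gives rich evaluation behavior on any $k+1$ points) with the rigid image structure of linearized/subspace polynomials (which forces $\tilde p_\ell(\alpha_j)$ into a prescribed $\F_q$-subspace for every $j$). Concretely, one shows that the $(\eps-\gamma)n$ $\F_Q$-degrees of freedom in $\tilde p_\ell$ can be exploited to realize a ``subspace-compatible'' solution for any target $L_\ell$, and that the $t$ choices can be made jointly so that multiplication by $\delta_\ell$ maps the images into a fixed master subspace. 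Once this is established, the bandwidth bound $O(n\log(q)/(\eps-\gamma))$ bits is immediate, since each of the $(1-\gamma)n$ surviving nodes contributes $d_j = O(1/(\eps-\gamma))$ $\F_q$-symbols, and the decoder reassembles $F(\bff) = \sum_{\ell=1}^{t}\beta_\ell L_\ell(\bff)$ from the collected traces.
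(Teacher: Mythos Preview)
Your proposal correctly identifies the general linear-algebraic framework (Section~\ref{sec:prelim}): one needs, for each $i\in[t]$, a vector $\zeta_i\bw - \bz^{(i)}$ with $\bz^{(i)}\in C^\perp$ whose $j$-th coordinate lies in a small $\B$-subspace $V_j$. Your plan is the ``dual-codewords-first'' direction: pick $t$ representations $\bv^{(\ell)}$ of $F$ and hope that $\{\delta_\ell v_j^{(\ell)}:\ell\in[t]\}$ spans something of dimension $O(1/(\eps-\gamma))$ at every $j$. The gap is precisely that hope.

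Concretely, write $\bv^{(\ell)}=\bv^{(0)}+\by^{(\ell)}$ with $\by^{(\ell)}\in C^\perp$ vanishing on $\cI$. Even if your linearized-polynomial suggestion forces $y_j^{(\ell)}$ into a fixed low-dimensional $\B$-subspace $W$ for all $j,\ell$, the query at node $j$ is $\delta_\ell v_j^{(\ell)} = \delta_\ell v_j^{(0)} + \delta_\ell y_j^{(\ell)}$. Whenever $v_j^{(0)}\neq 0$, the first terms $\{\delta_\ell v_j^{(0)}:\ell\in[t]\}$ already span all of $\F$ (they are $v_j^{(0)}$ times a basis), so $\dim_\B V_j=t$ and you have recovered only the naive bandwidth. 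A parameter count does not save this: the $(\eps-\gamma)n$ $\F$-degrees of freedom per $\ell$ give $t^2(\eps-\gamma)n$ $\B$-parameters total, while forcing $\delta_\ell v_j^{(\ell)}\in V_j$ for fixed $V_j$ of dimension $s$ imposes roughly $t(t-s)(1-\gamma)n$ $\B$-constraints, which is only satisfiable for $s\gtrsim t(1-\eps)/(1-\gamma)=\Theta(t)$, not $s=O(1/(\eps-\gamma))$. The sentence ``the $t$ choices can be made jointly so that multiplication by $\delta_\ell$ maps the images into a fixed master subspace'' is exactly the statement that needs a construction, and nothing in the proposal supplies one.

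The paper proceeds in the opposite direction: it fixes the subspaces \emph{first}, setting $V_j=\spn_\B(v(\alpha_j))$ for a single polynomial $v(X)=\sum_{j=\jmin}^{\jmax} v_jX^j$ (so each $\dim_\B V_j\le 1$), and then proves via a coefficient-extraction argument that this scheme recovers $F_\bp$ for every $\bp$ supported on an interval $[\ellmin,\ellmax]$. The key computation expands $R(X)\equiv\tr(v(X)g(X))\bmod (X^Q-X)$, and the ``good triple'' condition on $(\jmin,\jmax,d)$ kills all $i\ne 0$ contributions to the coefficient of $X^d$, leaving exactly $\sum_\ell v_{d-\ell}g_\ell=\bp^T\bg$. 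To cover \emph{all} $\bp\in\F^k$ the paper uses $s\approx 1/(\eps-\gamma)$ such schemes with different triples, decomposing $\bp=\sum_r\bp^{(r)}$; failures in $\cI$ are handled by spending the remaining free coefficients of each $v^{(r)}$ to make it vanish on $\cI$. This is where the bandwidth factor $1/(\eps-\gamma)$ arises, and it is an explicit construction rather than an existence argument.
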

We make a few remarks about Theorem~\ref{thm:informal}:
\begin{itemize}
\item The naive scheme (downloading enough information to recover $\bx$, and then computing $F(\bx)$) requires 
\[ k \log(Q) =  \frac{n \log n}{1 - \eps} \] bits of bandwidth.  
Thus, when $\eps, \gamma, q$ are constant, our scheme gives an asymptotic improvement of a factor of $\log n$ over the naive scheme.  (Notice that $|\mathbb{F}| = Q = q^t$, so we may choose $q$ to be constant and allow $n = Q$ to grow by growing $t$).
\item Our scheme can tolerate a $\gamma$ fraction of failures, where $\gamma$ can be arbitrarily close to $\eps$.  Since the rate of the code is $1 - \eps$, by the Singleton bound the relative distance can be at most $\eps$, and so this is optimal. 
\item One may wonder about lower bounds on the bandwidth.  In Appendix~\ref{app:lb} (Observation~\ref{obs:nonmaxlb}), we show that $b \geq n \log_q\inparen{ \frac{n}{n-k+1}} \approx n \log_q(1/\eps)$ is necessary for Reed-Solomon codes, and that a similar result (Corollary~\ref{cor:lbrs}) holds for any MDS code. This shows that the linear dependence on $n$ in Theorem~\ref{thm:informal} is optimal for constant $\eps$ and $q$, although we leave it as an open question to pin down the correct dependence on $\eps$ and $\gamma$.  
\end{itemize}

We also give a simpler version of Theorem~\ref{thm:main}, Theorem~\ref{thm:ratehalf}, which does not tolerate any failures and which only works up the rate $1/2$.  Our main reason for presenting Theorem~\ref{thm:ratehalf} is for exposition, as the proof is simpler, but we are also able to get more precise constants.  In particular, we show that RS codes of rate approaching $1/2$ (for sufficiently large $q$, where $Q = q^t$ is the size of $\F$) can compute any linear function with bandwidth at most $n \lceil \log q \rceil$.    When $q=2$, we get an RS code of rate $1/4$ with bandwidth $n$ bits, or \emph{only one bit} from each node.

\item \textbf{Applications in distributed storage, coded computation, and homomorphic secret sharing.}  Our results have applications in several domains.  
We elaborate on these next in Section~\ref{sec:apps}.

\end{enumerate}

\subsection{Applications}\label{sec:apps}
As noted above, low-bandwidth function evaluation shows up in several settings, and our work has natural applications in these areas.  We briefly mention a few potential applications of Theorem~\ref{thm:informal} to further motivate our results.  First, we make two remarks about the generality of our scheme.

\begin{remark}[Non-linear functions]\label{rem:nonlinear}
Our framework can also be used to efficiently compute certain non-linear functions, for example $\bx \mapsto \sum_{i=1}^k x_i^2$.  To see this, we first suppose without loss of generality that the Reed-Solomon code has a systematic encoding, so that $\bx$ is encoded as $(f(\alpha_1), \ldots, f(\alpha_n))$ where $f \in \F[X]$ is the unique polynomial of degree at most $k-1$ so that $f(\alpha_i) = x_i$ for $i=1,\ldots,k$.  Then define $g(X) = f(X)^2$ and observe that each node $i=1, \ldots, n$ can locally compute $g(\alpha_i)$.  Thus, we can apply our scheme to the Reed-Solomon code of dimension $2k-1$ to recover the linear combination $\sum_{i=1}^k g(\alpha_i) = \sum_{i=1}^k x_i^2$.
\end{remark}

\begin{remark}[Prime fields]\label{rem:prime}
Our approach requires that $\F$ be an extension field over a base field $\B$.  However, in many applications (including those discussed below), it is desirable to work over a prime field.  The reason is that often we actually want to work over the reals or the integers, and these can be nicely embedded in $\mathbb{F}_p$ for a large enough prime $p$.  Fortunately, for certain linear functions $F$, our approach can still be used to save bandwidth when we wish to amortize several computations over prime fields.  

In more detail, suppose that $\B = \F_p$ for a large prime $p$, and let $\F = \F_{p^t}$.  Let $\zeta_1, \ldots, \zeta_t$ be a basis for $\F$ over $\B$.  Suppose that the linear function we want to compute is $F(\bx) = \bb^T \bx$, where $\bb \in \B^k$ has coefficients in the base field $\B$.  This is the case, for example, in Remark~\ref{rem:nonlinear} when we want to compute the $\ell_2$ norm: all of the coefficients are $1$.  It is also the case when the data $\bx$ represents a histogram and we'd like to take the sum of certain buckets: all of the coefficients are $0$ or $1$.
If $\bb \in \B^k$, then we can proceed as follows.  View the data $\bx \in \F^k$ as $t$ data points in $\B^k$.  That is, we write $x_j = \sum_{i=1}^t y_{i,j} \zeta_i$, and interpret $\bx$ as $t$ vectors $\by^{(i)} = ( y_{i,1}, y_{i,2}, \ldots, y_{i,k} )$.  If we use our scheme to compute $F(\bx)$, then we have computed
\[ F(\bx) = \sum_{j=1}^k b_j x_j = \sum_{j=1}^k b_j \inparen{ \sum_{i=1}^t \zeta_i y_{i,j} } = \sum_{i=1}^t \zeta_i \bb^T \by^{(i)} = \sum_{i=1}^t \zeta_i F(\by^{(i)}). \]
Since $F(\by^{(i)}) \in \B$, and since the $\zeta_i$ form a basis for $\F$ over $\B$, we can now read off the values $F(\by^{(i)})$. 
This allows us to compute $t$ evaluations of $F$ on vectors $\by^{(1)}, \ldots, \by^{(t)} \in \B^k$, using bandwidth $O(n \log p)$ (assuming that the rate $1 - \eps$ of the code and fraction $\gamma < \eps$ of failed nodes are constants).  In contrast, the naive computation would require $O(tn \log p)$ bits.  So for such linear functions $F$, our scheme can do $t = \log_p n$ computations for the bandwidth cost of a single computation in the naive scheme.
\end{remark}

\paragraph{Distributed Storage.} The application to distributed storage was described in Section~\ref{sec:intro}.  In this context, Theorem~\ref{thm:informal} gives a method to compute any linear function of data stored on a distributed storage system with non-trivial download bandwidth.  The reader may be wondering about the \em upload \em bandwidth: don't we need to communicate the function $F$ to each node?  The reason that we focus on the download bandwidth (as is also the case for regenerating codes) is because of the way that files are stored in a typical distributed storage system.  In more detail, a large file $\bx$ will be broken up into blocks $\bx^{(1)}, \ldots, \bx^{(M)} \in \F^k$, where $M$ is very large, and each $\bx^{(i)}$ will be encoded as $\bc^{(i)}$, so that the $j$'th node stores $\{c^{(i)}_j\suchthat j \in [M]\}$.  With this set-up, the evaluation scheme of Theorem~\ref{thm:informal} would be run independently on each of the blocks, so that the upload cost is just the cost of broadcasting $F$, while the download cost is $M$ times the bandwidth guaranteed in the theorem.  Since $M$ is large, the download cost dominates the upload cost, and Theorem~\ref{thm:informal} yields real bandwidth savings over the naive scheme.

\paragraph{Coded Computation and Low-Bandwidth Matrix-Vector Multiplication.} Suppose we would like to distribute some data $\bX$ among $n$ worker nodes and perform a computation $f(\bX)$ in a distributed way. 
A body of work~\cite{LLPPR18, DCG19, YLRKSA19} on \em coded computation \em has proposed introducing redundancy in the data assignment, with the goal of tolerating stragglers (worker nodes that may be slow or non-responsive): that is, we would like responses from any $\tilde{k}$ out of $n$ workers to determine $f(\bX)$.  
There are two lines of work in coded computation.  One line of work adds redundancy by replicating and appropriately distributing data (for example the work on gradient coding mentioned above~\cite{YA18}, or a line of work aimed at general MapReduce computations~\cite{LMYA17}), and aims to minimize download bandwidth.  Unfortunately, because the coding is done by replication, the rate of the resulting code is necessarily small.  A second line of work adds redundancy through true ``coding'' (eg, taking nontrivial linear combinations).  This allows for high-rate codes without much overhead in terms of the total computational load, but instead of focusing on bandwidth, this line of work has focused on minimizing the number $\tilde{k}$ of nodes that need to respond.
Several works in this second line have focused on linear functions, like matrix-vector multiplication~\cite{LLPPR18,DCG19} or Fourier transforms~\cite{YMA17b}; to the best of our knowledge, none of these have focused on download bandwidth beyond minimizing the number $\tilde{k}$ of workers that need to respond.
 
Our work provides a way to interpolate between these two lines of work.  That is, our work gives coded computation schemes for linear functions that \em both \em can have low download bandwidth \em and \em that can use non-replication-based coding to achieve a high rate.
In particular, Theorem~\ref{thm:informal} shows that we can use a rate $1 - \eps$ RS code, with bandwidth that scales like $n/\eps$, saving an $O(\log n)$ factor when $\eps$ is constant.  
As per Remarks~\ref{rem:nonlinear} and \ref{rem:prime} above, this approach can be used effectively to compute, say, $\ell_p$ norms over the reals, even though our Theorem~\ref{thm:informal} is stated for linear functions over extension fields.

We note that this is not directly comparable to prior work for coded computation of linear functions (eg, \cite{LLPPR18,DCG19,YMA17}) for two reasons.  First, those works have focused on computations with a larger output (eg, matrix-vector multiplication, where the output is a vector rather than a scalar), while our approach is most effective when the desired output is a scalar.  
Second, in much of the work on coded computation, the identities of the stragglers are not known to the other worker nodes.  In our approach, since the scheme may depend on the set $\mathcal{I}$ of failed nodes, the parameter server would have to broadcast this information, which may not be practical.  However, we note that the problem is still interesting even if there are no stragglers, simply to reduce download bandwidth (as in \cite{LMYA17}); or when the ``stragglers'' can be planned (for example to do load balancing between multiple tasks).

\paragraph{Homomorphic Secret Sharing.}  We have described the basic set-up for Homomorphic Secret Sharing (HSS) above.  Our scheme immediately gives a compact single-client HSS scheme for linear functions, by sharing a secret $\bx \in \F^k$ using a generalization of Shamir's scheme (as in \cite{FY92}) as follows.  Let $\bx \in \F^k$ be a secret.  Let $\tilde{k} > k$, so that $\tilde{k} + k < n$.  We encode $\bx$ with a systematic Reed-Solomon code, so that $x_i = f(\beta_i)$ for $i=1, \ldots, k$, where $f$ is a random polynomial of degree at most $\tilde{k}-1$ so that this is true, and where $\beta_1, \ldots, \beta_k \in \F$ are fixed evaluation points.  Then we distribute shares $f(\alpha_1), \ldots, f(\alpha_n)$ to the $n$ parties, where $\alpha_i \in \F\setminus \{\beta_1, \ldots, \beta_k\}$.  Now, any $\tilde{k}$ parties can recover the secret, while any $\tilde{k}-k$ learn nothing about it.  Theorem~\ref{thm:informal} (treating the evaluation points $\beta_1, \ldots, \beta_k$ as the unavailable nodes in $\cI$) ensures that as long as $\eps = 1-\tilde{k}/n$ and $\gamma = k/n$ are constants with $\gamma < \eps$, then each party can compute a small local share $g_i(f(\alpha_i))$, which can then be combined to recover a linear function $F(\bx)$.

As noted in Remarks~\ref{rem:nonlinear} and \ref{rem:prime} above, this approach can also be used for amortizing the computation of certain (possibly nonlinear) functions over prime fields. 

\subsection{Related Work}\label{sec:related}

First, we mention two works that are similar in flavor to ours in that the aim is to compute functions on data encoded with an error correcting code, although the models are quite different. The first of these is~\cite{CGdW13}, which studies the notion of \emph{error-correcting data structures}.  In that work, a vector $\bx$, thought of as a database, is encoded as a data structure $\mathcal{C}(\bx)$; as in our work, the goal is to efficiently compute some function (e.g., perform a membership query) on $\bx$ given access to $\mathcal{C}(\bx)$, possibly in the presence of noise.  However, that work differs from ours because (a) they consider query complexity (rather than bandwidth) as the notion of efficiency; and (b) the noise they consider is errors (rather than erasures).  Thus, in some sense, the work \cite{CGdW13} generalizes locally decodable codes in the same direction that we generalize regenerating codes.
The second work that is similar in flavor is
the recent work~\cite{LBAY21} on \emph{function correcting codes}.  In that work, a sender Alice sends a message $\bx$ over a noisy channel to a receiver Bob who is only interested in some function $f(\bx)$.  The main focus of that work is on the trade-off between the amount of noise in the channel and the rate of the code, given that Bob can recover $f(\bx)$.  This differentiates their problem from ours because they do not study any notion of efficiency (like bandwidth or query complexity) on Bob's end.

As mentioned above, notions related to Definition~\ref{def:repair} arise in a variety of contexts, including in regenerating codes, coded computation, and homomorphic secret sharing.  We survey related work in these areas below.  

\paragraph{Regenerating codes.}
The body of work most related to ours is that of regenerating codes.  Regenerating codes were introduced in \cite{dimakis} and have seen a huge amount of work since then.  The work most related to ours is the study of scalar\footnote{In the regenerating codes literature, a \em scalar \em MDS code is one that is linear over its alphabet, as opposed to a \em vector \em MDS code, which is linear over a smaller field.} MDS codes, including RS codes.  This was initiated by \cite{SPDC14}, and further developed in a line of work including \cite{GW17, TYB18}.  These works give \em repair schemes \em for RS codes, which can be seen as evaluation schemes for RS codes and for the class of functions $\mathcal{F} = \inset{ F_i: \bx \mapsto \cC(x)_i\suchthat i \in [n]}$.  The work~\cite{GW17} gives a characterization of repair schemes for MDS codes.  This characterization inspires our Definition~\ref{def:linearScheme} and Proposition~\ref{prop:linearIsEnough}, which gives a similar formulation for evaluation schemes for linear codes and classes of linear functions.  However, our framework for RS codes developed in Section~\ref{sec:framework} is quite different than the approach in \cite{GW17}.  In more detail, in \cite{GW17}, the goal is to choose dual codewords $\by^{(1)}, \ldots, \by^{(t)}$, so that they give rise to low-dimensional $\B$-subpsaces.  In contrast, our approach is to go the other way around: we first pick the low-dimensional $\B$-subspaces, and show how they give rise to appropriate dual codewords.

\paragraph{Coded computation.}  As mentioned above, there are two main lines of work in coded computation.  We refer the reader to \cite{CCsurvey} for a survey.  
One line of work has focused on coding for stragglers and has used ``true'' coding (in the sense that linear combinations of the original data are stored, rather than repeated blocks).  In our framework, stragglers correspond to the set $\cI$ of failures, the code maps some data $\bx$ to a codeword $\bc$ that is distributed to workers, and the goal is to compute some function $F(\bx)$ from computations $g_i(c_i)$ performed by the worker $i$ on their part of the encoded data $c_i$.  (We note that typically in these settings the symbols $x_j$ and $c_i$ are actually vector or matrix-valued, and the code is applied to each coordinate in parallel).  This line of work has considered linear functions like matrix-vector multiplication~\cite{LLPPR18,DCG19} or Fourier transformations~\cite{YMA17b}, as well as non-linear functions like matrix-matrix multiplication~\cite{YMA17} and computation of low-degree polynomials~\cite{YLRKSA19}.  The main focus has been on minimizing the number of workers required to complete their task before the desired function can be computed, as well as on analyzing when and how much this can speed up computation given stochastic models of stragglers.  However, to the best of our knowledge, this line of work has not considered the network bandwidth, which is what we consider here.

A second line of work has also focused on coding for stragglers, but has used replication-based coding.  That is, the data $\bx$ is separated into blocks, and these blocks are distributed to workers with repetition.  For example, worker $i$ might receive blocks 1 and 2, and worker $j$ might receive blocks 2 and 3.  This approach is especially common in the area of \em gradient coding \em \cite{gradient_coding, HASH18, RTTD20}, where the goal is to compute the function $F_{\btheta}(\bx) = \nabla L( \bx;\btheta )$ which is the gradient of a loss function at a current iterate $\btheta$.  In this set-up, again the main goal is to minimize the number of nodes that need to respond before the function can be computed, but some works like \cite{YA18} have also considered the download bandwidth.  Thus, the goal of \cite{YA18} is similar to ours, but the approach differs because (a) they are using a replication-based code, and in particular the rate must be low; but (b) their scheme does not depend on the identity of the stragglers, which ours does.
We note that there are several relaxations of the gradient coding problem, for example when the stragglers are random and/or the gradient only needs to be approximately computed~\cite{CPE17, RTTD20, LKAS18}.  Again, those works differ from ours because of the replication-based coding and the different model of stragglers.

A final line of
work, starting with \cite{LMYA17}, has focused on minimizing communication bandwidth, as we do here, but in a different setting.
That work considers computation in a general MapReduce framework.  In that work, the data is distributed before the Map phase, introducing redundancy via replication.  Then the data is shuffled before the Reduce phase; the goal is to reduce the amount of communication in the shuffle.  Finally, the Reduce phase occurs, and each node needs to compute the function that they are responsible for.  This can be viewed as a decentralized version of our setting where each node wants to compute a (different) function.  Key differences between that work and ours are that (a) the coding comes via replication, and (b) the goal is to be able to support generic computation in the MapReduce framework, rather than focusing on specific functions.

\paragraph{Homomorphic Secret Sharing.}
Homomorphic Secret Sharing was introduced in \cite{HSS0} and has been further explored in \cite{HSS2} and the references therein.  As noted above, a single-client compact HSS scheme is related to our definition of low-bandwidth function evaluation, where the code is given by the secret-sharing scheme.  The work \cite{HSS0} gave a two-party HSS scheme for any deterministic branching program that is cryptographically secure; this scheme has been optimized in \cite{HSS1}, and other works~\cite{BKS19,OSY21,RS21} have achieved similar results under different cryptographic assumptions.  The work \cite{HSS2} has studied the problem more generally, including under information-theoretic security, and provided lower bounds. While the setup of HSS is quite related to our work, most existing work on HSS is in a very different parameter regime.  For example, the two-party case studied in \cite{HSS0} corresponds to a code of length $n=2$.  
Additionally, since an MDS code provides an information-theoretically secure secret-sharing scheme, HSS is most related to our work under information-theoretic security.  However, most constructions that we are aware of for HSS have focused on cryptographic security.
One exception is the recent work~\cite{FIKW21}, which focuses on the download bandwidth of information-theoretic HSS.  However, that work focuses on \emph{multi}-client HSS, where the $k$ secrets in $\bx \in \F^k$ must be secret-shared independently of each other; in contrast, the application of our work sketched above is for \emph{single}-client HSS, where the $k$ secrets in $\bx$ may be shared jointly.

\subsection{Organization}
In Section~\ref{sec:tech} we set notation and give a brief overview of our approach.  In Section~\ref{sec:prelim} we introduce our framework for linear functions and linear codes.  In Section~\ref{sec:framework} we introduce our framework for RS codes in particular.  In Section~\ref{sec:proofs}, we instantiate our framework to prove Theorem~\ref{thm:main}, the more detailed version of Theorem~\ref{thm:informal} above.  Section~\ref{sec:conc} concludes with some open questions.

\section{Notation and Technical Overview}\label{sec:tech}
In this section we set some notation and give a quick technical overview of the main ideas in our work.
\subsection{Notation}
Throughout, we use $[n]$ to denote the set $\{1, 2, \ldots, n\}$.  
We use bold lowercase letters like $\bx$ to denote vectors, and bold uppercase letters like $\bG$ to denote matrices.  For a vector $\bx$, we use $x_i$ to denote the $i$'th coordinate of $x$.  We use $\bx|_{[i,j]}$ to denote the vector $(x_i, x_{i+1}, \ldots, x_{j})$.
For a polynomial $f(X) = \sum_i f_i X^i$, we define the \em degree set \em of $f$ to be
\[ \degset(f(X)) = \inset{ i \suchthat f_i \neq 0 }. \]

We always work over a field $\F = \F_Q$, where $Q = q^t$ and we will let $\B = \F_q$ be the subfield of $\F$ of size $q$.  With $Q = q^t$ as above, we will make use of the \em field trace \em of $\F_Q$ over $\F_q$, defined by
\[ \tr(X) = \sum_{i=0}^{t-1} X^{q^i}.\]
We note the following two facts about the field trace:
\begin{itemize}
\item The field trace is $\F_q$-linear and its image is contained in $\F_q$.
\item The field $\F = \F_Q$ is a vector space over the subfield $\B = \F_q$.  Given a basis $\zeta_1, \ldots, \zeta_t$ for $\F$ over $\B$, the traces $\tr(\zeta_1 \alpha), \ldots, \tr(\zeta_t \alpha)$ uniquely specify $\alpha \in \F$.
\end{itemize}

We consider \em linear \em codes $\cC: \F^k \to \F^n$.  Such a code can be represented by a full-rank \em generator matrix \em $\bG \in \F^{n \times k}$, so that $\cC(\bx) = \bG\bx$ for $\bx \in \F^k$.

We consider both $\F$-subspaces of $\F^n$ and $\B$-subspaces of $\F$ or $\F^n$.
To that end, we use $\spn$, $\dim$, and $\cdot^{\perp}$ (with no decoration) to refer to the span, dimension, and orthogonal complement over $\F$.  We use $\spn_{\B}$, $\dim_\B$ and $\cdot^{\perp_\B}$ (decorated with a ``$\B$'') to denote the span, dimension, and orthogonal complement over $\B$.  We define the orthogonal complement over $\B$ as follows.
For a $\B$-vector space $V \subseteq \F$, we define $V^{\perp_\B} := \inset{ x \in \F \suchthat \tr(xv) = 0 \forall v \in V }$.  
For a $\B$-vector space $\cV \subset \F^n$, we define
\[ \cV^{\perp_\B} := \inset{ \bx \in \F^n \suchthat \tr( \bx^T \bv ) = 0 \forall \bv \in V}. \]

\subsection{Technical Overview}
Our approach begins with a general linear-algebraic framework, similar to that from \cite{GW17} for renegerating codes.  
Let $\B$ be a subfield of $\F$, and let $\zeta_1, \ldots, \zeta_t$ be a basis for $\F$ over $\B$.
For a code $\cC:\F^k \to \F^n$, 
let $C = \cC(\F^k)$, so $C$ is a subspace of dimension $k$ in $\F^n$ that consists of all codewords.
We can associate an evaluation scheme with a sequence of $\B$-subspaces $V_1, \ldots, V_n \subset \F$, by demanding that node $j$ return enough information to evaluate $\tr( c_j \nu )$ for all $\nu \in V_j$.  Since $V_j$ is $\B$-linear, it suffices to send $b_j$ symbols from $\B$, where $b_j = \dim_\B(V_j)$.  When is this enough information to recover a linear function $F_\bp(\bx) = \bp^T \bx$?

In Definition~\ref{def:linearScheme}, we define a \em linear evaluation scheme \em as a sequence of $\B$-subspaces $V_1, \ldots, V_n \subset \F$ that has a nice relationship to $\cC$, and then we show that this nice relationship allows us to recover linear functions $F_\bp(\bx)$.  More precisely, 
let
\[ \cW = V_1^{\perp_\B} \times \ldots, \times V_n^{\perp_\B}. \]
We show in Section~\ref{sec:prelim} that if $\spn_\F\inparen{ C \cap \cW}$ 
has low dimension over $\F$, then there are many linear functions $F_\bp$ that can be recovered by the scheme derived from the subspaces $V_1, \ldots, V_n$. 
Thus, the goal becomes to find $\B$-subspaces $V_1, \ldots, V_n \subset \F$ so that $\spn_\F\inparen{ C \cap \cW }$ is low-dimensional over $\F$.  (Additionally, we need to keep track of \em which \em linear functions we can recover, but we will gloss over that in this overview).  Notice that $\cW$ is a $\B$-vector space, but \em not \em an $\F$-vector space.  Thus, it is not obvious how to get a handle on the dimension of this span.

In order to control the dimension of $\spn_\F\inparen{ C \cap \cW}$, we specialize to Reed-Solomon codes (rather than any linear code); this is where our analysis departs in similarity from \cite{GW17}.  We do this in Sections~\ref{sec:framework} and \ref{sec:proofs}.
Suppose that we choose $V_i = \spn_\B\inparen{v(\alpha_i)}$, where $v(X) \in \F[X]$ is some polynomial.  Then our goal becomes to show that 
\begin{equation}\label{eq:inf_want}
\inset{ g \in \F[X] \suchthat \deg(g) < k, \tr( g(\alpha_j) v(\alpha_j) ) = 0\ \ \forall j \in [n]} 
\end{equation}
lies in a low-dimensional $\F$-vector space.  Again, this is tricky because ``$\tr( g(\alpha_j) v(\alpha_j) ) = 0$'' is a $\B$-linear constraint, and we want $\F$-linear constraints.  We turn these $\B$-linear constraints into $\F$-linear constraints as follows.  Consider the unique polynomial $R(X)$ of degree at most $n-1$ so that 
\[ R(X) \equiv \tr( g(X)v(X) ) \mod p_A(X), \]
where $p_A(X) = \prod_{j=1}^n (X - \alpha_j)$.
Now, if $\tr(g(\alpha_j)v(\alpha_j)) = 0$ for all $j$, then $R$ vanishes everywhere and is thus identically zero.
The polynomial $R$ is a bit tricky to write down, but if the evaluation points are all of $\F$, then $p_A(X) = X^Q - X$, and in fact taking the residue of $R(X)$ modulo $p_A(X)$ is tractable.  Thus, our strategy is to expand out $R(X)$ and choose the coefficients of $v$ carefully so that the coefficient on some term $X^d$ is of the form $\sum_\ell v_{d-\ell} g_\ell$.  Since that coefficient must be zero---because $R(X)$ is identically zero---this gives us an $\F$-linear constraint on the polynomial $g$.  
If we get enough linearly independent $\F$-linear constraints this way, we can show that the space \eqref{eq:inf_want} lies in a low-dimension $\F$-vector space, which in turn will show that there are many $F_{\bp}$ that can be recovered by the scheme associated with $V_i = \spn_\B(v(\alpha_i))$.
(Again, in this overview we gloss over the fact that we actually want to know \emph{which} functions $F_{\bp}$ can be recovered this way:
by keeping track of exactly which linear constraints we get, we are able to design the polynomial $v(X)$ so that we can control this.)

The approach above is sufficient to design a scheme for codes of rate up to $1/2$, that doesn't tolerate any failures $\cI$.  As a warm-up, we present this result as Theorem~\ref{thm:ratehalf}.  In order to extend our result to get Theorem~\ref{thm:main}, the more detailed version of Theorem~\ref{thm:informal} above, we must choose several polynomials $v^{(1)}, \ldots, v^{(s)}$, increasing the bandwidth by a factor of $s$.  There are two main ideas here.  First, in order to make the rate larger than $1/2$ in the scheme from Theorem~\ref{thm:ratehalf}, we must restrict not only the coefficients of $v$ but also the coefficients of $g$.  This results in a scheme for a subset $\cF'$ of linear functions.  By repeating this several times, we are able to recover all of the linear functions.  Second, in order to handle failures in an arbitrary set $\cI$, we choose the polynomials $v^{(r)}(X)$ to additionally vanish on the set $\cI$.  Indeed, since the subspace $V_j$ given by a polynomial $v(X)$ is $V_j = \spn_\B(v(\alpha_j))$, if $v$ vanishes on $\cI$ then $V_j = \{0\}$ for all $j \in \cI$.  Thus, the dimension is zero, and the $j$'th node does not need to return any information. 

We give the framework for general linear codes, and explain why $\spn_\F\inparen{C \cap \cW}$ is important, in Section~\ref{sec:prelim}.  We develop our ``pick nice polynomials $v(X)$'' framework in Section~\ref{sec:framework}.  Finally, we instantiate our framework for a full-length RS code and analyze it in Section~\ref{sec:proofs}.

\section{Framework for linear functions and any linear code}\label{sec:prelim}

For the rest of the paper, we focus on the special case where $\cC$ is a linear code, and where $\mathcal{F}$ is a set of linear functions.  In this case, evaluation schemes for $\mathcal{F}$ and $\cC$ can arise from a simple linear-algebraic condition, defined next.
\begin{definition}[Linear Evaluation Schemes]\label{def:linearScheme}
Let $\B$ be a subfield of $\F$, and let $\zeta_1, \ldots, \zeta_t \in \F$ be a basis for $\F$ over $\B$.
Let $\cC:\F^k \to \F^n$ be a linear code, and let $\bG \in \F^{n \times k}$ be a generator matrix for $\cC$.
Let $\bp \in \F^k$ and let $\bw \in \F^n$ be any vector so that $\bp = \bG^T \bw$.  (Note that such a vector exists since $\bG$ has full column-rank).

Suppose that $V_1, \ldots, V_n \subset \F$ are $\B$-subspaces, so that $\dim_{\B}(V_j) = b_j$.
Let $\cV = V_1 \times \cdots \times V_n \subset \F^n$.
We say that $(V_1, \ldots, V_n)$ provide a \emph{linear evaluation scheme for $\bp$ and $\mathcal{C}$} (with respect to $\{\zeta_1, \ldots, \zeta_t\}$) if for all $i \in [t]$,
\[ \zeta_i \bw \in \cC(\F^k)^\perp + \cV \]
The \emph{bandwidth} of the scheme is $\left( \sum_{j=1}^n b_j \right)\lceil \log|\B| \rceil$.
Further, we say that $(V_1, \ldots, V_n)$ \emph{tolerates failures in $\cI$} where $\cI := \inset{ j \in [n] \suchthat V_j = \{0\} }$.

\medskip

For $\cP \subseteq \F^k$, we 
 say that a map $\vphi:\cP \to (2^\F)^n$ 
provides a linear evaluation scheme for $\cP$ and $\cC$ if $\vphi(\bp) = (V_1, \ldots, V_n)$ provides a linear evaluation scheme for $\bp$ and $\cC$ for all $\bp \in \cP$.
For a set $\cI \subset [n]$,
we say that $\vphi$ tolerates failures in $\cI$ if, for all $\bp \in \cP$,  $\vphi(\bp)$ tolerates failures in $\cI$.
\end{definition}

The following proposition explains why a linear evaluation scheme $\vphi$ indeed gives us an evaluation scheme for a set of linear functions.
\begin{proposition}\label{prop:linearIsEnough}
Suppose that $\vphi$ provides a linear evaluation scheme for $\mathcal{P}$ and $\mathcal{C}$, with bandwidth $b$.  Then there is an evaluation scheme for the class of functions 
\[ \mathcal{F} = \{ F_{\bp}:\bx \mapsto \bx^T \bp \suchthat \bp \in \mathcal{P} \} \]  
and $\cC$ with bandwidth $b$.  Moreover, for $\cI \subset [n]$, this evaluation scheme tolerates failures in $\cI$ if $\vphi$ does. 
\end{proposition}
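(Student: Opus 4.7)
The plan is to turn the $\B$-subspaces $V_j$ into a concrete protocol: each node $j$ sends its coordinate in a $\B$-basis of the $\B$-linear trace-based ``dual'' of $V_j$, and the reconstruction uses the decomposition $\zeta_i \bw \in \cC^\perp + \cV$ guaranteed by Definition~\ref{def:linearScheme} to express $F_\bp(\bx)$ as a $\B$-linear combination of these node messages.

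Concretely, fix $\bp \in \cP$ and let $(V_1,\ldots,V_n)=\vphi(\bp)$. For each $j$, pick a $\B$-basis $\nu_{j,1},\ldots,\nu_{j,b_j}$ of $V_j$ and define
$g_j(c) = (\tr(c\nu_{j,1}),\ldots,\tr(c\nu_{j,b_j})) \in \B^{b_j}$, encoded in $b_j\lceil\log|\B|\rceil$ bits. The total bandwidth is exactly $\sum_j b_j\lceil\log|\B|\rceil = b$, and if $j\in\cI$ then $V_j=\{0\}$ forces $b_j=0$, matching Definition~\ref{def:repair}'s notion of tolerating failures in $\cI$.

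For the decoding map $G$, fix a $\bw\in\F^n$ with $\bG^T\bw=\bp$ and, using the hypothesis $\zeta_i\bw\in\cC(\F^k)^\perp+\cV$, write $\zeta_i\bw = \bu^{(i)} + \bv^{(i)}$ with $\bu^{(i)}\in\cC^\perp$ and $v^{(i)}_j\in V_j$ for every $i\in[t]$ and $j\in[n]$. Since $v^{(i)}_j$ is a $\B$-linear combination of the basis $\nu_{j,1},\ldots,\nu_{j,b_j}$, the value $\tr(c_j v^{(i)}_j)$ is a fixed $\B$-linear combination of the coordinates of $g_j(c_j)$, and hence is computable by $G$. Setting $T_i := \sum_{j=1}^n \tr(c_j v^{(i)}_j)$, the key identity is
\[
T_i \;=\; \tr\!\Big(\bc^T \bv^{(i)}\Big) \;=\; \tr\!\Big(\bc^T(\zeta_i\bw - \bu^{(i)})\Big) \;=\; \tr\!\big(\zeta_i \cdot \bc^T\bw\big),
\]
using $\B$-linearity of $\tr$ and the fact that $\bu^{(i)}\in\cC^\perp$ kills $\bc^T\bu^{(i)}=0$. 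Since $\tr(\zeta_i X)$ for $i=1,\ldots,t$ uniquely determines $X\in\F$ (as noted in Section~\ref{sec:tech}), $G$ can recover $\bc^T\bw$ from $T_1,\ldots,T_t$. Finally $\bc^T\bw = \bx^T\bG^T\bw = \bx^T\bp = F_\bp(\bx)$, so $G(g_1(c_1),\ldots,g_n(c_n)) = F_\bp(\bx)$ as required.

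There is no real obstacle here: the content of the proposition is the translation between the linear-algebraic object in Definition~\ref{def:linearScheme} and the information-theoretic object in Definition~\ref{def:repair}, and the step that deserves the most care is simply verifying that $T_i$ is $\B$-linearly computable from the $g_j(c_j)$ and that the ``$\bu^{(i)}\in\cC^\perp$ cancels'' identity above really reproduces $\tr(\zeta_i\bc^T\bw)$. Both are immediate once the definitions are unfolded.
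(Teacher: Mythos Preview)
Your proof is correct and follows essentially the same approach as the paper: choose a $\B$-basis for each $V_j$, have node $j$ return the traces $\tr(c_j\nu_{j,\ell})$, decompose $\zeta_i\bw=\bu^{(i)}+\bv^{(i)}$ with $\bu^{(i)}\in\cC(\F^k)^\perp$ and $\bv^{(i)}\in\cV$, cancel the dual-codeword part against $\bc$, and recover $\bc^T\bw=\bx^T\bp$ from the traces $\tr(\zeta_i\bc^T\bw)$. The only cosmetic difference is notation (the paper writes $\bz^{(i)}$ for your $\bu^{(i)}$ and works with $\zeta_i w_j - z_j^{(i)}$ directly rather than naming it $v_j^{(i)}$).
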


\begin{proof}
Suppose that $\vphi$ forms a linear evaluation scheme for $\mathcal{P}$ and $\mathcal{C}$.  Let $\bp \in \mathcal{P}$, and let $\bw \in \F^n$ be as in the theorem statement.
Let $(V_1, \ldots, V_n) = \varphi(\bp)$ be the linear evaluation scheme for $\bp$.
For each $j \in [n]$, let $\beta^{(j)}_1, \ldots, \beta^{(j)}_{b_j} \in V_j$ be a basis for $V_j$ over $\B$.  We will construct functions $g_1, \ldots, g_n$ and $G$ as in Definition~\ref{def:repair} that will allow us to reconstruct $F(\bx) = \bp^T \bx$.

Fix $\bx \in \F^k$ and let $\bc = \bG \bx \in \cC(\F^k)$ be the corresponding codeword.
First, we observe that
\[ \bc^T \bw = \bx^T \bG^T \bw = \bx^T \bp = F(\bx). \]
Thus, we focus on recovering $\bc^T \bw$.

We define the functions $g_j:\F \to \B^{b_j}$ by
\[ g_j(x) := ( \tr(c_j \beta_1^{(j)}), \tr(c_j, \beta_2^{(j)}), \ldots, \tr(c_j, \beta^{(j)}_{b_j}) ). \]
Now by the definition of a linear evaluation scheme, for all $i$, 
$\zeta_i \bw \in \cC(\F^k)^\perp + \cV$.  This implies that 
there are dual codewords $\bz^{(1)}, \ldots, \bz^{(t)} \in \cC(\F^k)^\perp$ so that for all $j \in [n]$ and all $i \in [t]$,
\[ \zeta_i w_j - z_j^{(i)} \in V_j. \]
In order to define the repair function $G$, we observe that for all $i \in [t]$, we have
\[ \zeta_i \sum_j w_j c_j = \sum_j c_j(\zeta_i w_j - z_j^{(i)}). \]
This is because $\sum_j c_j z_j^{(i)} = 0$, since $\bz^{(i)} \in \cC(\F^k)^\perp$ and $\bc \in \cC(\F^k)$.
In particular, this implies that for all $i \in [t]$,
\begin{equation}\label{eq:1}
 \tr\left( \zeta_i \sum_j w_j c_j \right) = \sum_j \tr( c_j(\zeta_i w_j - z_j^{(i)})). 
\end{equation}
Since $\zeta_i w_j - z_j^{(i)} \in V_j$ for all $j \in [n], i \in [t]$, we can write
\[ \zeta_i w_j - z_j^{(i)} = \sum_{\ell=1}^{b_j} a^{(i,j)}_\ell \beta^{(i)}_\ell \]
for some coefficients $a^{(i,j)}_\ell \in \B$.  Thus,
\begin{equation}\label{eq:2}
 \tr(c_j(\zeta_i w_j - z_j^{(i)})) = \sum_{\ell=1}^{b_j} a^{(i,j)}_\ell \tr( c_j \beta^{(i)}_\ell), 
\end{equation}
where above we have used the linearity of the trace.
Now, we can define the function $G$ to be the output of the following algorithm:
\begin{itemize}
	\item Input: $g_j(c_j) = ( \tr(c_j \beta_1^{(j)}), \tr(c_j, \beta_2^{(j)}), \ldots, \tr(c_j, \beta^{(j)}_{b_j}) )$ for all $j$.
	\item For each $i \in [t], j \in [n]$, use \eqref{eq:2} to recover $\tr(c_j(\zeta_i w_j - z_j^{(i)}))$ from the input.
	\item Use \eqref{eq:1} to recover $\tr\left( \zeta_i \sum_j w_j c_j \right)$ for all $i\in [t]$.
	\item Since $\zeta_1, \ldots, \zeta_t$ form a basis for $\F$ over $\B$, this is sufficient to recover $\sum_j w_j c_j = F(\bx)$.  Return $F(\bx)$.
\end{itemize}
Thus, we have a function evaluation protocol for $\cC$.  To finish the proof, we observe that the amount of information sent is $\sum_{j=1}^n b_j$ symbols from $\B$, so the total bandwidth is $\inparen{\sum_{j=1}^n b_j }\lceil\log|\B|\rceil$ bits.

Finally, observe that if $V_j = \{0\}$, then $b_j = 0$ and the scheme above does not need to contact symbol $i$, so this scheme tolerates failures in $\cI$.
\end{proof}

In the next lemma, we reformulate the condition in Definition~\ref{def:linearScheme} in a way that will be helpful going forward.
\begin{lemma}\label{lem:perp_char}
Let $\cV = V_1 \times V_2 \times \cdots \times V_n \subset \F^n$, where each $V_i$ is a $\B$-subspace of $\F$.  
Let $\cW = W_1 \times W_2 \times \cdots \times W_n \subset \F^n$, where $W_i = V_i^{\perp_{\B}}$.
Let $\cC:\F^k \to \F^n$ be a linear code.  Let $\zeta_1, \ldots, \zeta_t$ be a basis for $\F$ over $\B$.  Then for any $\bw \in \F^n$, 
\[ \zeta_i \bw \in \cC(\F^k)^\perp + \cV \ \ \ \forall i \in [t] \]
if and only if
\[ \bw \in \inparen{ \spn_{\F}(\cC(\F^k) \cap \cW) }^{\perp}. \]
\end{lemma}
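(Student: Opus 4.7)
The plan is to reformulate both sides through the non-degenerate $\B$-bilinear pairing $(\bx, \by) \mapsto \tr(\bx^T \by)$ on $\F^n$, passing from an ``$\F$-orthogonal plus $\B$-linear'' description to a pure $\perp_\B$ description, and then back to an $\F$-orthogonal statement. By the product structure, $\cW$ is exactly $\cV^{\perp_\B}$ under the trace pairing. Setting $C := \cC(\F^k)$ and noting that $C^\perp + \cV$ is a $\B$-subspace of $\F^n$, the fact that $\{\zeta_1, \ldots, \zeta_t\}$ is a $\B$-basis for $\F$ lets me restate ``$\zeta_i \bw \in C^\perp + \cV$ for all $i \in [t]$'' as the single condition $\F\bw \subseteq C^\perp + \cV$, i.e., $\alpha \bw \in C^\perp + \cV$ for every $\alpha \in \F$.

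Next I would take $\perp_\B$ of both sides. The trace pairing on $\F^n$ is non-degenerate---this is precisely the content of the second bullet in Section~\ref{sec:tech}, that the values $\tr(\zeta_i \alpha)$ uniquely specify $\alpha$---so for any $\B$-subspaces $A, B \subseteq \F^n$ we have $(A + B)^{\perp_\B} = A^{\perp_\B} \cap B^{\perp_\B}$ and $(A^{\perp_\B})^{\perp_\B} = A$. The key sub-claim is $(C^\perp)^{\perp_\B} = C$: the inclusion $C \subseteq (C^\perp)^{\perp_\B}$ follows because $\langle \bc, \by\rangle = 0$ implies $\tr(\langle \bc, \by\rangle) = 0$. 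For the reverse, given $\bx \notin C$, standard $\F$-linear algebra produces $\by \in C^\perp$ with $\langle \bx, \by\rangle \neq 0$; non-degeneracy of the trace pairing on $\F$ then yields $\alpha \in \F$ with $\tr(\alpha \langle \bx, \by\rangle) \neq 0$, and since $\alpha \by \in C^\perp$ this witnesses $\bx \notin (C^\perp)^{\perp_\B}$. Combining gives $(C^\perp + \cV)^{\perp_\B} = C \cap \cW$, and taking $\perp_\B$ once more yields $C^\perp + \cV = (C \cap \cW)^{\perp_\B}$.

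Finally, the condition $\F\bw \subseteq (C \cap \cW)^{\perp_\B}$ unpacks as $\tr(\alpha \langle \bw, \bc\rangle) = 0$ for every $\bc \in C \cap \cW$ and every $\alpha \in \F$, which by non-degeneracy of the trace pairing forces $\langle \bw, \bc\rangle = 0$ for each $\bc \in C \cap \cW$. This is exactly $\bw \in (\spn_\F(C \cap \cW))^\perp$, since the $\F$-orthogonal complement is unchanged by taking $\F$-spans. The main subtle step is the double-complement identity $(C^\perp)^{\perp_\B} = C$, where one must exploit that $C^\perp$ is an $\F$-subspace in order to rescale witnesses by $\alpha \in \F$ and invoke non-degeneracy of the trace; the rest reduces to standard duality for a non-degenerate $\B$-bilinear form.
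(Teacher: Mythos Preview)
Your proof is correct and follows essentially the same route as the paper's. Both arguments hinge on the same three ingredients: non-degeneracy of the trace pairing on $\F^n$ (so that the standard duality identities $(A+B)^{\perp_\B}=A^{\perp_\B}\cap B^{\perp_\B}$ and $(A^{\perp_\B})^{\perp_\B}=A$ hold for $\B$-subspaces), the product structure $\cV^{\perp_\B}=\cW$, and the observation that $\F$-linearity lets one pass between $\perp$ and $\perp_\B$. The only cosmetic difference is that the paper proves $C^{\perp_\B}=C^\perp$ (using $\zeta_i C=C$) and then applies $(C\cap\cW)^{\perp_\B}=C^{\perp_\B}+\cW^{\perp_\B}$, whereas you prove the dual statement $(C^\perp)^{\perp_\B}=C$ (rescaling a witness in $C^\perp$) and then apply $(C^\perp+\cV)^{\perp_\B}=C\cap\cW$; these are the same idea read from opposite ends.
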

\begin{proof}
Let $C = \cC(\F^k) \subset \F^n$.
We have
\begin{align}
\bw \in \inparen{ \spn_{\F}(C \cap \cW) }^\perp &
\Leftrightarrow \forall \by \in C \cap \cW, \bw^T \by = 0 \notag \\
&\Leftrightarrow \forall \by \in C \cap \cW, \forall i \in [t], \tr( (\zeta_i \bw)^T \by ) = 0 \notag \\
&\Leftrightarrow \forall i \in [t], \zeta_i \bw \in (C \cap \cW)^{\perp_\B} \notag \\
&\Leftrightarrow \forall i \in [t], \zeta_i \bw \in C^{\perp_\B} + \cW^{\perp_\B},  \label{eq:4}
\end{align}
using the fact that for any vector spaces $A,B$, we have $(A \cap B)^\perp = A^\perp + B^\perp$.  
Now, we observe that $C^{\perp_\B} = C^\perp$.  Indeed, 
\begin{align*}
 C^{\perp_\B} &= \inset{ \bx \suchthat \tr( \bc^T \bx ) = 0 \forall \bc \in C } \\
&= \inset{ \bx \suchthat \tr( \zeta_i \bc^T \bx ) = 0 \forall \bc \in C, \forall i \in [t] } \\
&= \inset{ \bx \suchthat \bc^T \bx = 0 \forall \bc \in C } = C^\perp.
\end{align*}
Above, we used the fact that since $\cC$ is linear, $C = \zeta_i C$ for all $i$.
Further, we observe that $\cW^{\perp_\B} = \cV$ by defintion.
Thus, from \eqref{eq:4}, we conclude that
\[ \bw \in \inparen{ \spn_{\F}(C \cap \cW)}^\perp \Leftrightarrow
\forall i \in [t], \zeta_i \bw \in C^{\perp} + \cV, \]
which is what we wanted to show.
\end{proof}

\section{Framework for linear functions and RS codes}\label{sec:framework}

The framework in Section~\ref{sec:prelim} was valid for any linear code $\cC$.  Now, we specialize to Reed-Solomon codes in order to leverage this characterization.  We begin with a few definitions that will be useful for our framework.

\begin{definition}\label{def:sigma}
Let $A = (\alpha_1, \ldots, \alpha_n)$, so that $\alpha_i \in \F$ are distinct.  
Define 
\[ p_A(X) = \prod_{j=1}^n (X - \alpha_j). \]
For a non-negative integer $j$ and for $i \in \{0, 1, \ldots, t-1\}$, define $\sigma_i(j) \subset \mathbb{Z}$ to be 
\[ \sigma_i(j) = \degset\inparen{\overline{X^{j q^i}}}, \]
where $\overline{X^{j q^i}}$ is the unique polynomial of degree at most $n-1$ so that
\[ \overline{ X^{jq^i} } \equiv X^{jq^i} \mod p_A(X). \] 
\end{definition}
We note that $\sigma_i$ depends on the choice of $A$, but we suppress this dependence in the notation for readability.

\begin{remark}\label{rem:sigmaZero}
For any $j < n$, we have $\sigma_0(j) = \degset\inparen{ \overline{X^j}} = \degset\inparen{X^j} = \inset{j}$.
\end{remark}

\begin{remark}\label{rem:specialSigma}
While for general $A$, $\sigma_i$ may be quite complicated, for some sets $A$ it is relatively simple.  For example, if $A = \F$, then $p_A(X) = X^Q - X$, and
\[ \sigma_i(j) = \{ jq^i \modstar Q-1 \}, \]
where
\[ x \modstar Q-1 := \begin{cases} y \in \{1, \ldots, Q-1\} \text{ so that } y \equiv x \mod Q-1 & \text{ if } x \neq 0 \\ 0 & \text{ if } x = 0 \end{cases}.\]
In particular, if we write $x \in \{0, 1, \ldots, Q-1\}$ in base-$q$ as
\[ x = \sum_{b=0}^{t-1} x_b q^b \]
for $x_b \in \{0,\ldots, q-1\}$,
then 
\[\sigma_i(x) = \left\{ \sum_{b=0}^{t-1} x_{b} q^{b + i \mod t-1} \right\} \]
is a circular shift of this expansion.
\end{remark}

\begin{definition}\label{def:good}
Let $0 < k \leq n$ and consider the Reed-Solomon code $\cC$ of dimension $k$ with evaluation points $A = (\alpha_1, \ldots, \alpha_n)$ over $\F$.
Let $\jmin, \jmax, d$ be positive integers so that $\jmin \leq \jmax$ and $\jmin < d$.
We say that $(\jmin, \jmax, d)$ is \emph{good} for $\cC$ if all of the following hold: 
\begin{enumerate}
\item $d < n$ and $\jmax + k - 1 < n$; 
\item for all $i \in  \{1, \ldots, t-1\}$, $d \not\in \bigcup_{j=\jmin}^{\jmax + k - 1} \sigma_i(j)$; and
\item $d \in \bigcup_{j=\jmin}^{\jmax + k - 1} \sigma_0(j)$,
\end{enumerate}
where above $\sigma_i$ is defined as in Definition~\ref{def:sigma} with respect to $A$.
Given some $d, \jmin, \jmax$, we define $\ellmin$ and $\ellmax$ by
\begin{equation}\label{eq:ells} 
\ellmin = \max\{ 0, d - \jmax \} \qquad \text{and} \qquad \ellmax = \min\{ k-1, d- \jmin \}.
\end{equation}
\end{definition}

\begin{definition}\label{def:consistent}
Fix $(d, \jmin, \jmax)$, and let $\ellmin,\ellmax$ be as in \eqref{eq:ells}. 
Let $\bp \in \F^k$, so that $\supp(\bp) \subseteq [\ellmin, \ellmax]$.  We say that a polynomial
\[ v(X) = \sum_{j=\jmin}^{\jmax} v_j X^j \]
is \emph{consistent} with $\bp$ (with respect to $(d, \jmin, \jmax)$), if $v_j = p_{d-j}$ whenever $d-j \in [0,k-1]$.
\end{definition}

Notice that, for any $\bp$ as in Definition~\ref{def:consistent}, there is some polynomial $v(X)$ consistent with $\bp$, given by
$v(X) = \sum_{j=\max\{d - k + 1, \jmin\}}^{\min\{d, \jmax\}} p_{d-j} X^j.$

With these definitions, we have the following lemma.
\begin{lemma}\label{lem:goodparity}
Let $0 < k \leq n$ and consider the Reed-Solomon code $\cC$ of dimension $k$ with evaluation points $A = (\alpha_1, \ldots, \alpha_n)$ over $\F$.
Suppose that $(\jmin, \jmax, d)$ is good for $\cC$, and let $\ellmin, \ellmax$ be as in Definition~\ref{def:good}.
Then for all $\bp \in \F^k$ so that $\supp(\bp) \subseteq [\ellmin, \ellmax]$, 
and for all $v(X)$ consistent with $\bp$,
there exist $\B$-subspaces $V_1, \ldots, V_n \subset \F$ with $\dim_{\B}(V_j) \leq 1$ for all $j$, so that the following holds:

Let $\bg \in \F^k$ and let 
 $g(X) = \sum_{\ell=0}^{k-1} g_\ell X^\ell$.  Suppose that $g(\alpha_j) \in V_j^{\perp_\B}$ for all $j \in [n]$.  Then $\bg^T \bp = 0$.

Further,  $V_j = \{0\}$ for all $j \in \cI$, where
$\cI = \inset{j \in [n] \suchthat v(\alpha_j) = 0}$.  (Notice that $\cI$ depends on both $\bp$ and the choice of $v(X)$).
\end{lemma}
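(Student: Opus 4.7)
The plan is to take the natural choice $V_j := \spn_\B(v(\alpha_j)) \subseteq \F$, which automatically satisfies $\dim_\B V_j \le 1$ with $V_j = \{0\}$ exactly when $v(\alpha_j) = 0$, so the failure-tolerance claim is immediate. With this choice, the hypothesis $g(\alpha_j) \in V_j^{\perp_\B}$ becomes the scalar condition $\tr(g(\alpha_j)\,v(\alpha_j)) = 0$ for every $j \in [n]$, and the entire task reduces to showing that these $n$ trace equations force $\bg^T \bp = 0$.

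The engine, as sketched in the technical overview, is to package the $n$ trace equations as the vanishing of a single polynomial on $A$. Let $P(X) := g(X)\, v(X) = \sum_m P_m X^m$, whose support lies in $[\jmin,\, \jmax + k - 1]$ by the degree constraints on $g$ and $v$. Since $\F$ has characteristic dividing $q$, Frobenius distributes over sums, giving $(P(\alpha))^{q^i} = \sum_m P_m^{q^i} \alpha^{m q^i}$, so I define
\[ Q(X) \;:=\; \sum_{i=0}^{t-1} \sum_{m} P_m^{q^i}\, X^{m q^i} \;\in\; \F[X], \]
which satisfies $Q(\alpha_j) = \tr(P(\alpha_j)) = 0$ for all $j$. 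Let $R(X) := Q(X) \bmod p_A(X)$ be the unique remainder of degree less than $n$; the vanishing of $R$ on all $n$ points of $A$ forces $R \equiv 0$.

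The last step is to extract the coefficient of $X^d$ in $R(X)$. By Definition~\ref{def:sigma}, the $(i,m)$ summand $P_m^{q^i} X^{m q^i}$ contributes to $[X^d] R(X)$ only when $d \in \sigma_i(m)$. Since $m$ ranges over $[\jmin,\, \jmax + k - 1]$, goodness condition~(2) rules out every $i \in \{1, \ldots, t-1\}$; for $i = 0$, Remark~\ref{rem:sigmaZero} together with $\jmax + k - 1 < n$ gives $\sigma_0(m) = \{m\}$, so only $m = d$ survives (using $d < n$), with coefficient $1$. Hence $P_d = [X^d] R(X) = 0$. Expanding $P_d = \sum_{\ell + j = d,\, 0 \le \ell \le k-1,\, \jmin \le j \le \jmax} g_\ell v_j$ restricts $\ell$ to $[\ellmin, \ellmax]$; by consistency of $v$ with $\bp$, $v_{d-\ell} = p_\ell$ throughout this range, and since $\supp(\bp) \subseteq [\ellmin, \ellmax]$ I conclude $\bg^T \bp = P_d = 0$.

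The only real obstacle is setting up $Q$ correctly and verifying that the three parts of the goodness hypothesis line up with the support of $P$ just so: conditions~(1) and~(2) together ensure that every contribution to $[X^d] R$ from $i \ge 1$ or from $m \ne d$ vanishes, and condition~(3) is automatic from the support of $P$ together with Remark~\ref{rem:sigmaZero}. Everything else is routine bookkeeping about Frobenius in characteristic dividing $q$ and about the consistency relation between $v$ and $\bp$.
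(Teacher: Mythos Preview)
Your proposal is correct and follows essentially the same argument as the paper: both define $V_j = \spn_\B(v(\alpha_j))$, reduce the hypothesis to $\tr(g(\alpha_j)v(\alpha_j))=0$, take the residue of the trace polynomial modulo $p_A(X)$, and use the goodness conditions to isolate the $i=0$, $m=d$ contribution as the coefficient $P_d = \sum_{\ell+j=d} g_\ell v_j$, from which consistency of $v$ with $\bp$ and the support condition on $\bp$ yield $\bg^T\bp=0$. The only cosmetic difference is that you name the intermediate polynomials $P$ and $Q$ explicitly, whereas the paper writes everything out in terms of the double sum over $(i,j,\ell)$; the logic and all the key verifications (Remark~\ref{rem:sigmaZero}, the role of each item in Definition~\ref{def:good}, and the bookkeeping for $[\ellmin,\ellmax]$) are identical.
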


Before we prove Lemma~\ref{lem:goodparity}, we show how to use it to obtain a linear evaluation scheme for $\cC$.  The following Theorem is our main framework theorem for RS codes.

\begin{theorem}[Main Framework Theorem]\label{thm:framework}
Let $\cC$ be a Reed-Solomon code.
Suppose that $(\jmin, \jmax, d)$ is good for $\cC$.  Let $\lmin, \lmax$ be as in \eqref{eq:ells}, and let
\[ \cP \subseteq \inset{ \bp \in \F^k \suchthat \supp(\bp) \subseteq [\lmin, \lmax] }. \]
Then there is a linear evaluation scheme $\vphi$ for $\cP$ and $\cC$ with bandwidth at most $n \lceil \log|\B| \rceil$.

\medskip
Further, for any collection 
\[ \inset{ v_{\bp}(X) \suchthat \bp \in \cP } \]
so that for each $\bp \in \cP$, $v_{\bp}(X)$ is consistent with $\bp$, there exists a scheme $\vphi$
that tolerates failures in 
\[ \cI := \inset{j \in [n] \suchthat v_\bp(\alpha_j) = 0 \ \ \forall \bp \in \cP},\] 
 with bandwidth is at most
\[ (n - |\cI|) \lceil \log|\B|\rceil. \]

\end{theorem}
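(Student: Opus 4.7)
The plan is to assemble Lemma~\ref{lem:goodparity} and Lemma~\ref{lem:perp_char} into the desired linear evaluation scheme. First I fix any choice of polynomials $\{v_\bp(X) \suchthat \bp \in \cP\}$ with $v_\bp$ consistent with $\bp$ (for the first part of the theorem this choice is arbitrary; for the second part it is the given collection). For each $\bp \in \cP$, I apply Lemma~\ref{lem:goodparity} with this $v_\bp$ to produce $\B$-subspaces $V_1(\bp),\ldots,V_n(\bp) \subseteq \F$ with $\dim_\B V_j(\bp) \leq 1$ for each $j$, and I define $\vphi(\bp) := (V_1(\bp),\ldots,V_n(\bp))$.

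Next I verify that $\vphi(\bp)$ is a linear evaluation scheme for $\bp$ and $\cC$ in the sense of Definition~\ref{def:linearScheme}. Let $\bw \in \F^n$ satisfy $\bp = \bG^T \bw$. By Lemma~\ref{lem:perp_char}, it suffices to show that $\bw$ lies in $\inparen{\spn_\F(\cC(\F^k) \cap \cW)}^\perp$, where $\cW = V_1(\bp)^{\perp_\B} \times \cdots \times V_n(\bp)^{\perp_\B}$. Since orthogonality is preserved under $\F$-spans, this reduces to showing $\bc^T \bw = 0$ for every $\bc \in \cC(\F^k) \cap \cW$. Writing $\bc = \cC(\bg)$ for some $\bg \in \F^k$ with $g(X) = \sum_{\ell=0}^{k-1} g_\ell X^\ell$, the condition $\bc \in \cW$ unpacks to $g(\alpha_j) = c_j \in V_j(\bp)^{\perp_\B}$ for every $j \in [n]$. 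Then Lemma~\ref{lem:goodparity} gives $\bg^T \bp = 0$, and so
\[ \bc^T \bw = \bg^T \bG^T \bw = \bg^T \bp = 0, \]
as required.

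For the bandwidth bound, each $V_j(\bp)$ has $\B$-dimension at most $1$, so $\sum_j \dim_\B V_j(\bp) \leq n$, giving bandwidth at most $n \lceil \log|\B|\rceil$ in the first statement. For the second statement, fix $j \in \cI = \inset{j \suchthat v_\bp(\alpha_j) = 0 \ \forall \bp \in \cP}$. Then $v_\bp(\alpha_j) = 0$ for every $\bp \in \cP$, so the last clause of Lemma~\ref{lem:goodparity} forces $V_j(\bp) = \{0\}$ for every $\bp \in \cP$. This is exactly the condition that $\vphi$ tolerates failures in $\cI$, and since $b_j = 0$ for $j \in \cI$ uniformly over $\bp$, the total bandwidth improves to at most $(n - |\cI|)\lceil\log|\B|\rceil$.

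The main work is really done by Lemma~\ref{lem:goodparity}; the only genuine obstacle at this level is matching conventions carefully---in particular ensuring that $\cW = \cV^{\perp_\B}$ in the form required by Lemma~\ref{lem:perp_char}, and that the RS-specific identification $c_j = g(\alpha_j)$ lets us invoke Lemma~\ref{lem:goodparity} on each codeword in $\cC(\F^k) \cap \cW$. No additional computation beyond chasing these definitions is anticipated.
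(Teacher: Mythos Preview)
Your proposal is correct and follows essentially the same route as the paper's proof: apply Lemma~\ref{lem:goodparity} to each $\bp$ (with the chosen $v_\bp$) to obtain the subspaces $V_j(\bp)$, then use Lemma~\ref{lem:perp_char} to translate the conclusion $\bg^T\bp=0$ into the condition $\zeta_i\bw\in\cC(\F^k)^\perp+\cV$ required by Definition~\ref{def:linearScheme}, and read off the bandwidth and failure-tolerance from $\dim_\B V_j(\bp)\le 1$ and the vanishing clause of Lemma~\ref{lem:goodparity}. If anything, you are slightly more explicit than the paper in spelling out $\bc^T\bw=\bg^T\bG^T\bw=\bg^T\bp$.
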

\begin{proof}[Proof of Theorem~\ref{thm:framework}, assuming Lemma~\ref{lem:goodparity}]
We prove the ``Further'' statement, since it implies that first statement.  (Indeed, we may take $v_\bp(X)$ to be any polynomial consistent with $\bp$).

Suppose that $(\jmin, \jmax, d)$ is good for $\cC$.  
Let $\bp \in \cP$, and suppose that $v_{\bp}(X)$ is consistent with $\bp$.
Let $\bG \in \F^{n \times k}$ be a generator matrix for $\cC$ and let $\bw$ be such that $\bp = \bG^T \bw$, as in Definition~\ref{def:linearScheme}.
Let $V_1, \ldots, V_n$ be the $\B$-subspaces guaranteed for $\bp$ and $v_{\bp}(X)$ by Lemma~\ref{lem:goodparity}, and define $W_j = V_j^{\perp_\B}$ for $j = 1, \ldots, n$.  Let $\cW = W_1 \times \cdots \times W_n$ and let $\cV = V_1 \times \cdots \times V_n$.

The guarantee of Lemma~\ref{lem:goodparity} implies that $\bc^T \bw = 0$ for all $\bc \in \cC(\F^k) \cap \cW$, so $\bw \in (\cC(\F^k) \cap \cW)^\perp$.  By Lemma~\ref{lem:perp_char}, this implies that 
\[ \zeta_i \bw \in \cC(\F^k)^\perp + \cV \qquad \forall i \in [t], \]
where $\zeta_1, \ldots, \zeta_t$ is a basis for $\F$ over $\B$.
Therefore from Definition~\ref{def:linearScheme}, $(V_1, \ldots, V_n)$ is a linear scheme for $\bp$ and $\cC$, and the map $\vphi$ that maps $\bp$ to $(V_1, \ldots, V_n)$ as above is a linear scheme for $\cP$ and $\cC$.

Further, Lemma~\ref{lem:goodparity} implies that for all $\bp \in \cP$, if $\vphi(\bp) = (V_1, \ldots, V_n)$ then $V_j = \{0\}$ for all $j \in \cI$.  Thus, $\vphi$ tolerates failures in $\cI$.

Finally, we observe that the bandwidth of the scheme is $\log|\B|$ times the number of $V_j$ so that $V_j \neq \{0\}$, which is at most $(n - |\cI|)\log|\B|$.
\end{proof}

Finally, we prove Lemma~\ref{lem:goodparity}.
\begin{proof}[Proof of Lemma~\ref{lem:goodparity}]
Suppose that $\cC$ is an RS code as in the statement of the lemma, so we have evaluation points $A = (\alpha_1, \ldots, \alpha_n)$.  Suppose that $(\jmin, \jmax, d)$ is good for $\cC$.  Choose $\bp \in \F^k$ such that $\supp(\bp) \in [\ellmin, \ellmax]$, and suppose that $v(X) = \sum_{j=\jmin}^{\jmax} v_j X^j$ is consistent with $\bp$.
Define
$ V_j := \spn_{\B}(v(\alpha_j)) $
for $j \in [n]$.  Notice that $\dim_\B(V_j) \leq 1$, as desired, and further that $V_j = \{0\}$ if $v(\alpha_j) = 0$.
Now suppose that $\bg \in \F^k$ so that $g(X) = \sum_{\ell=0}^{k-1} g_\ell X^\ell$ has $g(\alpha_j) \in V_j^{\perp_\B}$.
We wish to show that $\bg^T \bp = 0$.

From the definition of $V_j$ and the assumption that $g(\alpha_j) \in V_j^{\perp_\B}$ for all $j \in [n]$, we have
\[ \tr(v(\alpha_j)g(\alpha_j)) = 0 \]
for all $j$.  Consider the unique polynomial $R(X)$ of degree at most $n-1$ so that
\[ R(X) \equiv \tr(v(X)g(X)) \mod p_A(X), \]
where $p_A(X)$ is as in Definition~\ref{def:sigma}.
Thus, $R(\alpha_i) = 0$ for all $i \in [n]$.  Since $\deg(R) \leq n-1$, this implies that $R(X) \equiv 0$ is identically zero.
Consider the coefficient of $X^d$ in $R(X)$.  On the one hand, this is zero.  On the other hand, we can compute
\begin{align*}
 R(X) &= \sum_{i=0}^{t-1} \inparen{ \sum_{j=\jmin}^{\jmax} v_j X^j }^{q^i}\inparen{ \sum_{\ell=0}^{k-1} g_\ell X^\ell}^{q^i} \\
&= \sum_{i=0}^{t-1} \sum_{j, \ell} v_j^{q^i} g_\ell^{q^i} X^{ q^i(\ell + j) }
\end{align*}
Thus, we have
\begin{align}
0=
\text{(coefficient of $X^d$ in $R(X)$)} &= \sum_{i=0}^{t-1} \sum_{j, \ell : d \in \sigma_i(\ell + j) } c_{\ell+j, d, i} v_j^{q^i} g_\ell^{q^i} ,\label{eq:coeff}
\end{align}
where $c_{r,d,i} \in \F$ are the coefficients that arise when we write
\[ \overline{X^{rq^i}} = \sum_{d \in \sigma_i(r)} c_{r,d,i} X^d. \]
(Above, as in Definition~\ref{def:sigma}, $\overline{X^{rq^i}}$ refers to the residue modulo $p_A(X)$).
Since $(d, \jmin, \jmax)$ is good, Item 2 of Definition~\ref{def:good} says that
 for all $i \neq 0$, and for all $r \in [\jmin, \jmax + k - 1]$, $d \not\in \sigma_i(r)$.
Since $\ell + j \in [\jmin, \jmax + k - 1]$, this implies that
the inner sum on the right hand side of \eqref{eq:coeff} is empty if $i \neq 0$.
Therefore, we have
\begin{align}
0 &= 
\text{(coefficient of $X^d$ in $R(X)$)} \notag\\
&= \sum_{j, \ell : d \in \sigma_0(\ell + j) } c_{\ell+j, d, 0} v_j g_\ell \notag \\ 
&= \sum_{j = \jmin}^{\jmax} \sum_{\ell=0}^{k-1} \ind{\ell + j = d} v_j g_\ell \label{ln:1}\\
&=\sum_{j=\max\{d-k+1, \jmin\}}^{\min\{d, \jmax\}} \sum_{\ell=0}^{k-1} \ind{\ell + j = d} v_j g_\ell \label{ln:2}\\
&= \sum_{\ell=\ellmin}^{\ellmax} p_\ell g_\ell \label{ln:3} \\
&= \sum_{\ell = 0}^{k-1} p_\ell g_\ell. \label{ln:4}
\end{align}
Above, we have used in \eqref{ln:1}
the fact that $\sigma_0(\ell + j) = \{\ell + j\}$ (as per Remark~\ref{rem:sigmaZero}, using the assumption that $\ell + j \leq k-1+j_{\max} < n$ as per Definition~\ref{def:good}); and the fact that $c_{d,d,0} = 1$ since we have $\overline{X^{d}} = X^d$ (using the assumption that $d < n$).  In \eqref{ln:2}, we have used the fact that for $j \in [\jmin, d-k] \cup [d+1, \jmax]$, $\ind{\ell+j = d} = 0$.  In \eqref{ln:3}, we have used the definition \eqref{eq:ells} of $\ellmin$ and $\ellmax$.  And in \eqref{ln:4}, we have used the fact that $\supp(\bp) \subseteq [\ellmin, \ellmax]$.

This shows that $\bp^T \bg = 0$, which completes the proof.
\end{proof}

\section{Proof of main theorem}\label{sec:proofs}

We begin with a warm-up that already gives good schemes for RS codes of rates approaching $1/2$.

\begin{theorem}\label{thm:ratehalf}
Let $Q = q^t$, for some $t \geq 2$ and some prime power $q$.
Suppose that $k \leq Q\left( \frac{1}{q} \lflor \frac{q}{2} \rflor \inparen{ 1 - \frac{1}{q}} \right)$.  Let $\cC$ be the Reed-Solomon code of dimension $k$ and length $n = Q$ over $\F = \F_Q$.  Let $\cF$ be the class of all linear functions from $\F^k$ to $\F$:
\[ \mathcal{F} = \inset{ \left(F_{\by} : \bx \mapsto \bx^T\by\right) \suchthat \by \in \F^k }. \]
Then there is an evaluation scheme for $\cF$ and $\cC$ with bandwidth $n \lceil \log_2 q \rceil.$
\end{theorem}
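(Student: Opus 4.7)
The plan is to apply Theorem~\ref{thm:framework} just once, with a single carefully chosen good tuple $(\jmin, \jmax, d)$ engineered so that $[\ellmin, \ellmax] = [0, k-1]$; then we may take $\cP = \F^k$ and thus cover every linear function $F_{\bp}$. Since $n = Q$ and the evaluation set is all of $\F$, Remark~\ref{rem:specialSigma} tells us $\sigma_i(j) = \{j q^i \modstar (Q-1)\}$, the cyclic shift of the base-$q$ expansion of $j$. Setting $a := \lfloor q/2 \rfloor$, I would choose
\[
 d := a\, q^{t-1}, \qquad \jmax := d, \qquad \jmin := a\, q^{t-2} + 1.
\]
With $\jmax = d$ we get $\ellmin = 0$, and the hypothesis $k \leq Q\cdot \frac{1}{q}\lfloor q/2\rfloor(1-1/q) = a\, q^{t-2}(q-1)$ rearranges to $k - 1 \leq d - \jmin$, so $\ellmax = k-1$. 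Hence $\cP = \F^k \subseteq \{\bp : \supp(\bp) \subseteq [\ellmin,\ellmax]\}$ is a valid choice for the framework.

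Next I would verify the three conditions of Definition~\ref{def:good}. Condition~1 ($d < n$ and $\jmax + k - 1 < n = q^t$) is a routine size check using the assumed bound on $k$. Condition~3 asks that $d \in \bigcup_{j=\jmin}^{\jmax+k-1} \sigma_0(j) = [\jmin, \jmax + k - 1]$, which is immediate since $\jmin \leq d \leq \jmax$. The main work is condition~2: for each $i \in \{1,\ldots,t-1\}$ and each $j \in [\jmin, \jmax + k - 1]$, we need $d \notin \sigma_i(j)$; equivalently, $j \not\equiv d\, q^{t-i} \pmod{Q-1}$. Using $q^t \equiv 1 \pmod{Q-1}$, direct expansion yields
\[
 d\, q^{t-i} \;=\; a\, q^{2t-1-i} \;\equiv\; a\, q^{t-1-i} \pmod{Q-1}
\]
for $i \in \{1,\ldots,t-1\}$, and since $a\, q^{t-1-i} \leq a\, q^{t-2} < q^{t-1} < Q-1$, the unique representative in $\{1,\ldots,Q-1\}$ is $a\, q^{t-1-i}$. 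So the forbidden representatives form the geometric progression $\{a, a q, a q^2, \ldots, a q^{t-2}\}$, all of which are strictly less than $\jmin = a\, q^{t-2} + 1$; meanwhile $\jmax + k - 1 = a\, q^{t-1} + k - 1 < Q - 1$ by the bound on $k$ (using $a \leq q/2$), so no $j$ in the range ``wraps'' modulo $Q - 1$ into the forbidden set. This establishes condition~2.

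With $(\jmin, \jmax, d)$ verified good for $\cC$ and $[\ellmin, \ellmax] = [0,k-1]$, Theorem~\ref{thm:framework} produces a linear evaluation scheme for $\cP = \F^k$ and $\cC$ of bandwidth at most $n \lceil \log|\B|\rceil = n\lceil \log_2 q \rceil$; Proposition~\ref{prop:linearIsEnough} then converts this into an evaluation scheme for the class $\cF = \{F_{\bp} : \bp \in \F^k\}$ with the same bandwidth. The main obstacle is really the choice of $d$: once one sees that placing a single base-$q$ digit $a$ in the top position makes the forbidden cyclic shifts form a geometric progression of ratio $q$ starting at $a$, the bound $k \leq a\, q^{t-2}(q-1)$ emerges naturally as the largest $k$ for which the contiguous range $[\jmin, \jmax + k - 1]$ still fits strictly between the largest forbidden shift $a\, q^{t-2}$ and $Q - 1$.
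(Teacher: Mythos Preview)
Your proposal is correct and follows essentially the same approach as the paper: the same choice of $d = \lfloor q/2\rfloor q^{t-1}$ and $\jmin = \lfloor q/2\rfloor q^{t-2}+1$, the same use of Remark~\ref{rem:specialSigma} to reduce condition~2 to showing that the cyclic shifts $a q^{t-1-i}$ all fall below $\jmin$, and the same appeal to Theorem~\ref{thm:framework} and Proposition~\ref{prop:linearIsEnough}. The only difference is your choice $\jmax = d$ versus the paper's $\jmax = Q-k$; yours makes $\ellmin = 0$ immediate but requires the extra check $\jmax+k-1 < Q-1$ (which you carry out), while the paper's choice makes $\jmax+k-1 = Q-1$ automatic but requires an extra line to verify $\ellmin = 0$.
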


Notice that the rate of the RS code $\cC$ in Theorem~\ref{thm:ratehalf} can be as large as 
\[ \frac{k}{n} = \frac{1}{q} \lflor\frac{q}{2}\rflor (1 - 1/q) \geq \frac{1}{2} - \frac{3}{2q}, \]
which approaches $1/2$ as $q$ grows.  We note that for $q=2$, the rate of $\cC$ is $1/4$.

\begin{proof}[Proof of Theorem~\ref{thm:ratehalf}]
We will use Theorem~\ref{thm:framework} to show that there is a linear scheme for $\cP = \F^k$.  Then Proposition~\ref{prop:linearIsEnough} will imply the theorem.

Choose 
\begin{align*}
d &= \lflor \frac{q}{2} \rflor q^{t-1} \\
\jmin &= \lflor \frac{q}{2} \rflor q^{t-2} + 1 \\
\jmax &= Q - k.
\end{align*}
We claim that $(d, \jmin, \jmax)$ is good for $\cC$.  We check the three items in Definition~\ref{def:good}: 
\begin{enumerate}
\item Since $n = Q = q^t$, and using the choice of $d$ above, we have $d < n$.  We also have $\jmax + k - 1 = Q-1 < n$.
\item As per Remark~\ref{rem:specialSigma}, for this full-length RS code we have $\sigma_i(j) = \{jq^i \modstar Q-1\}$.
Thus, the second item in Definition~\ref{def:good} is equivalent\footnote{Note here that for all $i\geq0$, $q^i$ is a unit of $\mathbb{Z}/(Q-1)\mathbb{Z}$, and in particular $q^t \modstar Q-1 = Q \modstar Q-1 = 1$.} to showing that for all $i = 1, \ldots, t-1$,
\begin{align*} d       &\neq q^i j \modstar Q-1 \qquad \forall j\in [\jmin, \jmax+k-1]\\
   dq^t \modstar Q-1 & \neq q^ij \modstar Q-1 \qquad \forall j\in [\jmin, \jmax+k-1]\\
   dq^{t-i} \modstar Q-1 & \neq j \modstar Q-1 \qquad \forall j\in [\jmin, \jmax+k-1]\\
  d q^{t-i} \modstar Q-1 & \not\in [ \jmin, \jmax + k - 1 ].
  \end{align*}
Plugging in the definitions of $d$, $\jmin$ and $\jmax$, this is the same as showing that for all $i = 1, \ldots, t-1$,
\[ \lflor \frac{q}{2} \rflor q^{t-i-1} \not\in \left[ \lflor \frac{q}{2} \rflor q^{t-2} + 1, Q-1 \right]. \]
This is true, because for all $i = 1, \ldots, t-1$, we have
\[ 0 <\lflor \frac{q}{2} \rflor q^{t-i-1} < \lflor \frac{q}{2} \rflor q^{t-2} + 1. \]
\item Finally, using the fact that $\sigma_0(j) = \{j\}$ for all $j \in [\jmin, \jmax + k - 1]$, the third item is equivalent to showing that $d \in [\jmin, \jmax + k - 1]$, or that
\[ \lflor \frac{q}{2} \rflor q^{t-1} \in \left[ \lflor \frac{q}{2} \rflor q^{t-2} + 1 , Q-1 \right], \]
which is true.
\end{enumerate}
Thus, $(d, \jmin, \jmax)$ is good for $\cC$.  
Now we compute $\lmin, \lmax$ as in Theorem~\ref{thm:framework}.  We have
\begin{align*}
\lmin&= \max\{0,  d - \jmax\} \\
&= \max\inset{ 0, \lflor \frac{q}{2} \rflor q^{t-1} - q^t + k } \\
&= 0,
\end{align*}
using the fact that $Q = q^t$ and 
\[ k \leq Q\inparen{ \frac{1}{q} \lflor \frac{q}{2} \rflor (1 - 1/q) } \leq Q\frac{1}{q} \lflor \frac{q}{2} \rflor \leq Q\inparen{ 1 - \frac{1}{q} \lflor \frac{q}{2}\rflor}. \]
We also have
\begin{align*}
\lmax &= \min\{k-1, d - \jmin\} \\
&= \min \inset{ k-1, \lflor \frac{q}{2} \rflor q^{t-1} - \lflor \frac{q}{2} \rflor q^{t-2} - 1 } \\
&= k-1,
\end{align*}
using the fact that $k \leq Q\inparen{ \frac{1}{q} \lflor \frac{q}{2} \rflor (1 - 1/q) }$.
Therefore we have
\[ \inset{ \bp \in \F^k \suchthat \supp(\bp) \subseteq [\ellmin, \ellmax] } = \F^k. \]

By Theorem~\ref{thm:framework} and the fact that $(d, \jmin, \jmax)$ is good for $\cC$, we conclude that there is a linear evaluation scheme $\varphi$ for $\cP = \F^k$, 
and $\cC$, with bandwidth $n \lflor \log q \rflor$, which is what we wanted to show.
\end{proof}

The reason that Theorem~\ref{thm:ratehalf} has rate limited by $1/2$ is that if we were to take $k$ to be larger, the interval $[\lmin, \lmax]$ would not be all of $[0,k-1]$.  In the next theorem, we modify the construction in Theorem~\ref{thm:ratehalf} to give a constant number of schemes like the one in Theorem~\ref{thm:ratehalf}, each of which covers a small interval, but which together cover all of $[0,k-1]$.  Thus, we can increase the rate of the code to approach $1$, at the cost of increasing the bandwidth by a constant factor.  While we are at it, we give ourselves enough freedom in order to choose the schemes so that they can tolerate failures in any set $\cI$ that is not too large.

\begin{theorem}\label{thm:main}
Let $Q = q^t$, for some $t \geq 2$ and prime power $q$.
Let $\eps, \gamma > 0$.
Let
\[ \delta \geq \gamma + \frac{1}{q}, \]
and suppose that $\eps > \delta$ and that $(\eps - \delta)q$ is an integer.
Suppose that $k \leq Q(1 - \eps)$, and let $\cC$ be the Reed-Solomon code of dimension $k$ and length $n = Q$ over $\F = \F_Q$.  Let $\cF$ be the class of all linear functions from $\F^k$ to $\F$:
\[ \mathcal{F} = \inset{ \left(f : \bx \mapsto \bx^T\by\right) \suchthat \by \in \F^k }. \]
Let $\cI \subset [n]$ be any set of size $|\cI| < \gamma n$.
\medskip

Then there is an evaluation scheme $\Phi$ for $\cF$ and $\cC$ that tolerates failures in $\cI$, and that has bandwidth at most
\[ (n - |\cI|) \cdot \inparen{\frac{1 }{ \eps - \delta}} \cdot   \lceil \log_2 q \rceil.\]

\end{theorem}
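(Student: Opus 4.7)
My plan is to instantiate Theorem~\ref{thm:framework} $s = 1/(\eps - \delta)$ times, using $s$ distinct good triples whose covered intervals collectively span $[0, k - 1]$, and to combine the $s$ resulting schemes into one. For any $\bp \in \F^k$, I would decompose $\bp = \sum_{r=1}^s \bp\sp{r}$ with $\supp(\bp\sp{r}) \subseteq [\lminr, \lmaxr]$; by linearity, $F_\bp(\bx) = \sum_r F_{\bp\sp{r}}(\bx)$, so running the $s$ schemes in parallel (each node concatenating its per-scheme responses; the decoder summing the outputs) yields $F_\bp(\bx)$. Setting $M = (\eps - \delta) q \in \Z^+$ and $s = q/M$, for each $r$ I would pick $\dr = c\sp{r} q^{t-1}$ with $c\sp{r}$ an integer in $\{1, \ldots, q-1\}$, and choose $(\jminr, \jmaxr)$ so that the block $[\lminr, \lmaxr]$ has length roughly $M q^{t-1} = (\eps - \delta) Q$. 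Some blocks (\emph{Type A}) use $\jmaxr \geq \dr$ and sit at $\lminr = 0$; others (\emph{Type B}) use $\jmaxr < \dr$ and sit at $\lmaxr = k-1$; together they tile $[0, k-1]$.

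For each triple I would verify goodness (Definition~\ref{def:good}): items (1) and (3) reduce to routine inequalities, and item (2) follows from Remark~\ref{rem:specialSigma}, which tells us that the nontrivial cyclic shifts of $\dr = c\sp{r} q^{t-1}$ are exactly $\{c\sp{r}, c\sp{r} q, \ldots, c\sp{r} q^{t-2}\}$, all at most $(q-1) q^{t-2}$; placing $\jminr$ just above this threshold avoids them. Next, for each $\bp\sp{r}$ supported in the $r$-th block, I need a polynomial $v\sp{r}_{\bp\sp{r}}(X) = \sum_{j = \jminr}^{\jmaxr} v_j X^j$ that is consistent with $\bp\sp{r}$ (per Definition~\ref{def:consistent}) and vanishes at $\{\alpha_j\}_{j \in \cI}$. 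Consistency pins down $v_j$ for $j \in [\jminr, \jmaxr] \cap [\dr - k + 1, \dr]$; the remaining ``free'' coefficients absorb the $|\cI|$ vanishing constraints. The latter constitutes a linear system whose coefficient matrix contains a restricted Vandermonde submatrix, and by choosing any $|\cI|$ free indices with distinct exponents the submatrix is invertible, so the system has a solution.

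With the triples and polynomials in hand, Theorem~\ref{thm:framework} produces, for each $r$, a linear evaluation scheme $\vphi_r$ for $\cP_r = \{\bp : \supp(\bp) \subseteq [\lminr, \lmaxr]\}$ that tolerates failures in $\cI$ and has bandwidth at most $(n - |\cI|) \lceil \log q \rceil$; Proposition~\ref{prop:linearIsEnough} converts each $\vphi_r$ into an evaluation scheme for $\{F_\bp : \bp \in \cP_r\}$. Combining the $s$ schemes as described above yields the promised evaluation scheme for all of $\cF$, with total bandwidth $s (n - |\cI|) \lceil \log q \rceil = (n - |\cI|) \lceil \log_2 q \rceil / (\eps - \delta)$.

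The main technical obstacle is the joint design of the triples under three competing pressures: (a) goodness forces $\jminr$ above the cyclic-shift region of $\dr$, absorbing about $1/q$ of the ``budget''; (b) vanishing on $\cI$ demands at least $|\cI| < \gamma n$ free coefficients beyond the consistency range, absorbing $\gamma$ of the budget; and (c) the $s$ covered blocks must span $[0, k-1]$, forcing each window $[\jminr, \jmaxr]$ to be wide enough that $s = 1/(\eps-\delta)$ suffices. The hypothesis $\delta \geq \gamma + 1/q$ is precisely the slack making (a) and (b) simultaneously feasible, leaving $\eps - \delta$ for (c); it is also where the integer parameterization $M = (\eps - \delta)q$ enters.
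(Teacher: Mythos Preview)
Your high-level plan---instantiate Theorem~\ref{thm:framework} with $s \approx 1/(\eps-\delta)$ good triples, decompose $\bp = \sum_r \bp^{(r)}$, and sum the resulting evaluations---matches the paper's. The gap is in how you arrange for the consistent polynomials $v^{(r)}$ to vanish on~$\cI$.

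You propose to fix the decomposition $\bp^{(r)}$ first and then solve for $v^{(r)}$, treating as free only those coefficients $v_j^{(r)}$ with $j \in [\jminr, \jmaxr] \setminus [\dr - k + 1, \dr]$. But for any triple with $\jminr > \dr - k + 1$ and $\jmaxr < \dr$ (equivalently, $\lminr > 0$ and $\lmaxr < k - 1$), this set is empty: consistency pins \emph{every} coefficient in the window, leaving nothing to absorb the $|\cI|$ constraints. Your Type~A/Type~B dichotomy avoids such ``middle'' triples, but then every block is anchored at an endpoint of $[0,k-1]$: each Type~A block covers $[0,\lmaxr]$ and each Type~B block covers $[\lminr,k-1]$, so only the largest Type~A block and the smallest-$\lminr$ Type~B block contribute to coverage. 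A short count shows they do not meet when $\eps$ is small: the largest Type~A block with $\geq \gamma Q$ free coefficients has $\dr \leq (\eps-\gamma)Q$, hence $\lmaxr < (\eps-\gamma)(1-1/q)Q$; the smallest Type~B block with $\geq \gamma Q$ free coefficients has $\dr \geq (k+\gamma Q)/(1-1/q)$, hence $\lminr \geq \bigl((1-\eps+\gamma)/(1-1/q) - \eps\bigr)Q$. For large $q$ these leave the range $[(\eps-\gamma)Q,\ (1-2\eps+\gamma)Q]$ uncovered whenever $\eps \lesssim 1/3$.

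The paper's fix is to choose the decomposition and the polynomials \emph{jointly and inductively}, with overlapping blocks (Claim~\ref{claim:ps}). For each $r$, one first fixes $\bp^{(r)}$ only on the ``new'' portion $[\lminr, \lmin^{(r+1)} - 1]$, which pins the corresponding $v_j^{(r)}$. The remaining $v_j^{(r)}$---precisely those with $\dr - j$ in the overlap $[\lmin^{(r+1)}, \lmaxr]$---are then chosen so that $v^{(r)}$ vanishes on $\cI$; but doing so simultaneously \emph{determines} $\bp^{(r)}$ on the overlap via $p_\ell^{(r)} := v^{(r)}_{\dr - \ell}$, and $\bp^{(r+1)}$ later compensates. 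The key estimate (Claim~\ref{claim:lr}, Item~4) is that this overlap has size at least $\gamma Q - 1 \geq |\cI|$, and that is exactly where $\delta \geq \gamma + 1/q$ is used. In short, the free degrees of freedom live in the overlap between consecutive blocks, not outside the consistency range as your proposal assumes.
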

\begin{remark}
For constant $\gamma$, the requirements on $\eps, \gamma, \delta$ may be satisfied with a choice of $\eps = \gamma + \Theta(1/q)$.
Thus, as $q$ grows, $\eps$ may approach $\gamma$.
This means that the trade-off between the rate of the code ($1 - \eps$) and the fraction of failures tolerated ($\gamma$) approaches the Singleton bound, which is optimal (regardless of bandwidth).
\end{remark}

\begin{remark}
We have chosen to present Theorem~\ref{thm:main} as it applies to the full-length Reed-Solomon code of length $n=Q$.
However, the scheme can also be used for shorter codes with $n<Q$, as long as $Q-n \leq \gamma Q$.
This is because the scheme tolerates failures of up to $\gamma Q$ nodes, and we may instead imagine these nodes never existed in the first place.
In total, the number of failed or nonexistent nodes can be at most $\gamma Q$.
\end{remark}

\begin{proof}[Proof of Theorem~\ref{thm:main}]
Define $s$ to be the largest integer so that 
\[ s < \frac{1}{\eps - \delta}. \]
where $\delta$ is as in the theorem statement.
Before we proceed, we record the following useful claim:
\begin{claim}\label{claim:useful}
With $\delta,\eps$ as in the theorem statement, we have
\[ \delta \geq \frac{1- \eps}{q-1} + \frac{ \gamma}{1 - 1/q}. \]
\end{claim}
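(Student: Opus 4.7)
The plan is to prove Claim~\ref{claim:useful} by direct algebraic manipulation, leveraging two facts that the hypotheses force. First, since $(\eps - \delta)q$ is a positive integer (positivity follows from $\eps > \delta$), we have $(\eps - \delta)q \geq 1$, i.e.\ $\eps \geq \delta + 1/q$. Second, the hypothesis $\delta \geq \gamma + 1/q$ gives $\gamma \leq \delta - 1/q$.

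With these two one-sided bounds in hand, I would upper-bound each of the two summands on the right-hand side of the claimed inequality. For the $\gamma$ term,
\[
\frac{\gamma}{1 - 1/q} \;\leq\; \frac{\delta - 1/q}{1 - 1/q} \;=\; \frac{q\delta - 1}{q - 1},
\]
and for the $\eps$ term,
\[
\frac{1 - \eps}{q - 1} \;\leq\; \frac{1 - \delta - 1/q}{q - 1} \;=\; \frac{q - q\delta - 1}{q(q-1)}.
\]
Adding these two bounds and putting over the common denominator $q(q-1)$ gives
\[
\frac{1-\eps}{q-1} + \frac{\gamma}{1 - 1/q} \;\leq\; \frac{q(q\delta - 1) + q - q\delta - 1}{q(q-1)} \;=\; \frac{q\delta(q-1) - 1}{q(q-1)} \;=\; \delta - \frac{1}{q(q-1)},
\]
which is strictly less than $\delta$, so the claim follows (with a little slack).

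There is no real obstacle here; the step that one has to be slightly careful with is the first observation, namely extracting $\eps - \delta \geq 1/q$ from the two separately innocuous hypotheses "$\eps > \delta$" and "$(\eps - \delta)q \in \Z$". After that, the computation is a short rearrangement. I would write the proof in just a few lines, displaying the two bounds and the final simplification above.
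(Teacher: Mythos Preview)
Your proof is correct. It differs from the paper's in one notable way: you invoke the integrality hypothesis $(\eps - \delta)q \in \Z$ to get $\eps \geq \delta + 1/q$, and then bound each summand on the right-hand side separately. The paper's proof does not use the integrality hypothesis at all; it only uses the chain $\eps > \delta \geq \gamma + 1/q$, rewriting $\gamma + 1/q$ as
\[
\gamma + \frac{1}{q} = \frac{1 - 1/q - \gamma}{q-1} + \frac{q\gamma}{q-1}
\]
and then applying $\eps \geq \gamma + 1/q$ to the first fraction. So the paper's argument is slightly more economical in its hypotheses (it would go through even without the integer condition), while your argument is perhaps more transparent in that it isolates the contribution of each term and even yields the explicit slack $\frac{1}{q(q-1)}$. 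Either way, the claim follows easily.
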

\begin{proof}
We have
\begin{align*}
\gamma + \frac{1}{q} &= \frac{1 -1/q}{q-1} + \gamma \\
&= \frac{ 1 - 1/q - \gamma }{q-1} + \frac{ q\gamma}{q-1} \\
&\geq \frac{ 1 - \eps }{q-1} + \frac{\gamma}{1 - 1/q},
\end{align*}
in the last line using the assumptions that 
\[ \eps \geq \delta \geq \gamma + \frac{1}{q}. \]
\end{proof}

For $r = 1, \ldots, s$, we will define a evaluation scheme $\Phi^{(r)}$ that tolerates failures in $\cI$.  Each of these evaluation schemes will only be able to recover linear functions with support in some window, but together the $\Phi^{(r)}$ will form an evaluation scheme for all of $\cF$.
We begin with the following claim.
\begin{claim}\label{claim:lr}
For $r = 1, \ldots, s$, there is a choice of $(\dr, \jminr, \jmaxr)$ 
so that:
\begin{enumerate}
\item $(\dr, \jminr, \jmaxr)$ is good for $\cC$ for all $1 \leq r \leq s$;
\item $d^{(1)} - \jmax^{(1)} \leq 0$;
\item $d^{(s)} - \jmin^{(s)} \geq k - 1 + Q\gamma$;
\item for all $1 \leq r < s$,
\[ \inparen{ \dr - \jminr } - \inparen{ d^{(r+1)} - \jmax^{(r+1)} } \geq Q\gamma - 1. \]
\end{enumerate}
\end{claim}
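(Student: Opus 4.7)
The plan is to generalize the choice made in the proof of Theorem~\ref{thm:ratehalf}. I will take
\[ \dr := c_r q^{t-1}, \qquad \jminr := c_r q^{t-2}+1, \]
where $c_1 < c_2 < \dots < c_s \leq q-1$ are positive integers in arithmetic progression with common difference $\Delta := (\eps-\delta)q$ (a positive integer by hypothesis), and I will choose each $\jmaxr$ as small as possible to satisfy the coverage conditions of the claim.

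First I verify that each $(\dr, \jminr, \jmaxr)$ is good for $\cC$. Since $A = \F$, Remark~\ref{rem:specialSigma} gives $\sigma_i(j) = \{q^i j \modstar Q-1\}$, so item~2 of Definition~\ref{def:good} is equivalent to $\dr q^{t-i} \modstar Q-1 \notin [\jminr, \jmaxr+k-1]$ for $i=1,\dots,t-1$ (as in the proof of Theorem~\ref{thm:ratehalf}). Plugging in $\dr = c_r q^{t-1}$, this evaluates to $c_r q^{t-1-i}$, whose maximum over $i \in \{1,\dots,t-1\}$ is $c_r q^{t-2} < \jminr$, so the condition holds automatically. Item~3 follows from $\dr \in [\jminr, \jmaxr+k-1]$, which is automatic as long as $\jmaxr \geq \dr - k + 1$; item~1 reduces to $c_r \leq q-1$ together with $\jmaxr \leq Q-k$.

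Next, I translate conditions~2--4 of the claim into inequalities on $c_1,\dots,c_s$ and the $\jmaxr$ and set the parameters accordingly. Condition~2 becomes $\jmax^{(1)} \geq c_1 q^{t-1}$, which I set to equality. Condition~4 rearranges to $\jmax^{(r+1)} \geq \Delta\, q^{t-1} + c_r q^{t-2} + Q\gamma$, which I again set to equality. Condition~3 becomes $c_s q^{t-2}(q-1) \geq k + Q\gamma$, which I ensure by picking $c_1$ large enough that $c_s = c_1 + (s-1)\Delta$ meets this bound.

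The main obstacle is verifying feasibility of these parameter choices. For the upper bound $\jmaxr \leq Q-k$ (needed for goodness), expansion using $k \leq Q(1-\eps)$ and $c_r \leq q-1$ reduces the required inequality to $q^2(\delta - \gamma) \geq q-1$, which holds since $\delta - \gamma \geq 1/q$. For the joint constraints $c_s \leq q-1$ and $c_s \geq q^2(1-\eps+\gamma)/(q-1)$ to be simultaneously satisfiable, I need $q^2(1-\eps+\gamma) \leq (q-1)^2$, equivalent to $\eps - \gamma \geq 2/q - 1/q^2$; this follows from $\eps - \gamma \geq (\eps - \delta) + 1/q \geq 2/q$. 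Finally, since $s$ is the largest integer below $1/(\eps-\delta)$, we have $(s+1)\Delta \geq q+1$, hence $(s-1)\Delta \geq q+1-2\Delta$, which ensures the arithmetic progression is long enough to reach the required $c_s$ from a small positive $c_1$.
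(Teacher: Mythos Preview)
Your choice of $\dr = c_r q^{t-1}$ and $\jminr = c_r q^{t-2}+1$ matches the paper (which takes $c_r = r\Delta$), but your decision to set each $\jmaxr$ \emph{as small as possible} creates a gap. You correctly note that item~3 of Definition~\ref{def:good} requires $\jmaxr \geq \dr - k + 1$, but you never verify it; your ``main obstacle'' paragraph only checks the upper bound $\jmaxr \leq Q-k$. With your choice $\jmax^{(r)} = \Delta q^{t-1} + c_{r-1}q^{t-2} + Q\gamma$ for $r\geq 2$, the lower bound rearranges to
\[
c_{r-1}\,q^{t-2}(q-1) \;\leq\; Q\gamma + k - 1,
\]
which must hold for every $r\leq s$. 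But $k$ is only assumed to satisfy $k \leq Q(1-\eps)$; when $k$ and $\gamma$ are small (e.g.\ $k=1$ and $\gamma$ tiny), the right-hand side is near zero while the left-hand side is at least $q^{t-2}(q-1)$ since $c_{r-1}\geq 1$. So the constraint can fail, and with it item~3 of goodness.

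The paper avoids this entirely by taking $\jmaxr = Q-k$ for all $r$, the \emph{largest} value permitted by item~1 of Definition~\ref{def:good}. Then item~3 of goodness collapses to $\dr \leq Q-1$, which is just $c_r \leq q-1$, and conditions~2 and~4 of the claim are checked directly using $q^t - k \geq \eps Q$. There is no need to vary $\jmaxr$ or to leave $c_1$ free; the paper simply takes $c_r = r\Delta$. (As a minor aside, your claim that $(s+1)\Delta \geq q+1$ is off by one: if $1/(\eps-\delta)$ is an integer $m$ then $s=m-1$ and $(s+1)\Delta = q$ exactly.) The fix is straightforward: replace your minimal $\jmaxr$ with the constant $Q-k$ and drop the extra degree of freedom in $c_1$.
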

\begin{proof}
For $r = 1, \ldots, s$, define
\[ y^{(r)} = (\eps-\delta)qr.\]
Note that since $r \leq s < \frac{1}{\eps - \delta}$, we have $(\eps - \delta)r < 1$, and so $y^{(s)} < q$. Further, by our assumption that $(\eps - \delta)q \in \mathbb{Z}$, $y^{(r)}$ is an integer.
Define
\begin{align*}
 d^{(r)} &= y^{(r)} q^{t-1} \\
\jminr &= \yr q^{t-2} + 1 \\
\jmaxr &= Q - k.
\end{align*}
(Notice that these choices are reminiscent of the choices in the proof of Theorem~\ref{thm:ratehalf}).
First, we establish that each $(\dr, \jminr, \jmaxr)$ is good for $\cC$.  We check the three conditions in Definition~\ref{def:good}:
\begin{enumerate}
\item  For all $q \leq r \leq s$, we have
$\dr = y^{(r)} q^{t-1}$.  As noted above, $y^{(s)} < q$, and so we have
$\dr < q^t = Q = n$. 
Similarly we have $\jmaxr + k - 1 = Q-1 < n$.
\item As in the proof of Theorem~\ref{thm:ratehalf}, it suffices to show that 
\[ q^{t-i}\dr \modstar Q-1 \not\in [\jminr, \jmaxr + k - 1] \]
for all $i = 1, \ldots, t-1$.  
This is true since 
for all such $i$, we have
\[ 0 < q^{t-i} \dr \modstar Q-1 \leq \yr q^{t-2} < \jminr \]
using the definition of $\jminr$.
\item It suffices to show that $\dr \in [\jminr, \jmaxr + k - 1]$,which is equivalent to  
$\yr q^{t-1} \in [\yr q^{t-2} + 1, Q-1]$, which is true. 
\end{enumerate}
This establishes the first point of the claim.  

For the second point, we observe that
\begin{align*}
d^{(1)} - \jmax^{(1)} &=  y^{(1)} q^{t-1} - q^t + k \\
&= (\eps - \delta) Q - Q + Q(1 - \eps) \\
&= -\delta Q \leq 0.
\end{align*}

For the third point, we observe that
\begin{align*}
d^{(s)} - \jmin^{(s)} &=  y^{(s)} (q^{t-1} - q^{t-2}) - 1 \\
&= (\eps - \delta) sQ  (1 - 1/q) - 1 \\
&\geq (\eps - \delta) \inparen{ \frac{1}{\eps - \delta} - 1 } Q(1 -1/q) - 1 \\
&= (1 - \eps + \delta)Q (1 - 1/q) - 1,
\end{align*}
using the fact that $s \geq \frac{1}{\eps - \delta} - 1$.
In order for this to be at least $k - 1 + Q\gamma = Q(1 - \eps + \gamma) - 1$, we need
\begin{align*}
(1 - \eps + \delta)(1 - 1/q) &\geq 1 - \eps + \gamma \\
\delta(1 - 1/q) - \frac{1-\eps}{q} &\geq \gamma \\
\delta &\geq \frac{ 1-\eps}{q-1} + \frac{\gamma}{1 -1/q}, 
\end{align*}
which is indeed satisfied by our choice of $\delta$, by Claim~\ref{claim:useful}.
This establishes the third point.

Finally, for the fourth point, we compute
\begin{align*}
\inparen{ \dr - \jminr } - \inparen{ d^{(r+1)} - \jmax^{(r+1)}} &= \yr (q^{t-1} - q^{t-2}) -1 - y^{(r+1)} q^{t-1} + q^t - k \\
&= Qr(\eps - \delta)(1 - 1/q) - 1 - Q(r + 1)(\eps - \delta) + \eps Q \\
&= Q(\delta - r(\eps - \delta)/q ) - 1 \\
&\geq Q( \delta - s(\eps - \delta)/q ) - 1\\
&\geq Q\inparen{ \delta - \inparen{ \frac{ 1}{\eps - \delta}}\inparen{ \frac{ \eps - \delta}{q} }} - 1 \\
&= Q(\delta - 1/q) - 1 \\
&\geq Q\gamma - 1
\end{align*}
using the fact that $s \leq \frac{1}{\eps - \delta}$ in the third-to-last line, and using our assumption that $\delta\geq \gamma+1/q$ in the final line.
This establishes the last point, and proves the claim.

\end{proof}

\begin{claim}\label{claim:ps}
Let $(\dr, \jminr, \jmaxr)$ be as in Claim~\ref{claim:lr}, and let $\lminr = \max\{ 0, \dr - \jmaxr \} $ and $ \lmaxr = \min\{ k-1, \dr- \jminr \}$.
For any $\bp \in \F^k$, there is a sequence $\bp^{(1)}, \bp^{(2)}, \ldots, \bp^{(s)} \in \F^k$ so that:
\begin{enumerate}
\item For each $1 \leq r \leq s$, we have
\[\supp(\bp^{(r)}) \subseteq [\lminr, \lmaxr].\]
\item 
For each $1 \leq r \leq s$, there exists a polynomial $v^{(r)}(X)$ that is consistent with $\bp^{(r)}$, such that $v^{(r)}(\alpha_i) = 0$ for all $i \in \cI$.
\item We have $\sum_{r=1}^s \bp^{(r)} = \bp.$
\end{enumerate}
\end{claim}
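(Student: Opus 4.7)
My plan is to reduce the claim to the surjectivity of a natural linear map, and then establish surjectivity by a dual argument based on linear recurrences. Specifically, for each $r \in \{1, \ldots, s\}$, let $W^{(r)}$ be the $\F$-vector space of polynomials $v(X) = \sum_{j=\jminr}^{\jmaxr} v_j X^j$ with $v(\alpha_i) = 0$ for all $i \in \cI$, and let $\phi_r: W^{(r)} \to \F^k$ be the linear map defined by $\phi_r(v)_\ell := v_{\dr - \ell}$ for $\ell \in [\lminr, \lmaxr]$ and $0$ otherwise. A direct check -- using the identity $\dr - J^{(r)}_{\mathrm{fix}} = [\lminr, \lmaxr]$, where $J^{(r)}_{\mathrm{fix}} := [\max\{\jminr, \dr - k + 1\}, \min\{\jmaxr, \dr\}]$, which is immediate from the definitions of $\lminr, \lmaxr$ -- shows that for any $v \in W^{(r)}$ the vector $\bp^{(r)} := \phi_r(v)$ has support in $[\lminr, \lmaxr]$, $v$ is consistent with $\bp^{(r)}$, and $v$ vanishes on $\cI$. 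Thus the claim reduces to showing that the map $\Psi := \sum_r \phi_r \colon \bigoplus_r W^{(r)} \to \F^k$ is surjective.

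To prove surjectivity I would argue $\bigcap_r (\mathrm{image}(\phi_r))^\perp = \{0\}$. Suppose $\bg \in \F^k$ annihilates every $\mathrm{image}(\phi_r)$. Unpacking the annihilation condition -- via the fact that $(W^{(r)})^\perp$ inside $\F^{[\jminr, \jmaxr]}$ is spanned by the evaluation functionals $\{(\alpha_i^j)_j : i \in \cI\}$ -- yields, for each $r$, coefficients $y^{(r)}_i \in \F$ such that the sequence $H^{(r)}_j := \sum_{i \in \cI} y^{(r)}_i \alpha_i^j$ satisfies $H^{(r)}_j = g_{\dr - j}$ on $J^{(r)}_{\mathrm{fix}}$ and $H^{(r)}_j = 0$ on $J^{(r)}_{\mathrm{free}} := [\jminr, \jmaxr] \setminus J^{(r)}_{\mathrm{fix}}$. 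Observing that $H^{(r)}(X) = \sum_i y^{(r)}_i / (1 - \alpha_i X)$ is a rational function, $(H^{(r)}_j)$ obeys a linear recurrence of order $|\cI|$ with characteristic polynomial $\prod_{i \in \cI}(1 - \alpha_i X)$, which is invertible in both directions (the edge case $0 \in \{\alpha_i\}_{i \in \cI}$ is handled trivially since $\jminr \geq 1$ forces $v_0 = 0$ automatically, so the constraint at $\alpha_i = 0$ is vacuous). Hence any $|\cI|$ consecutive zeros of $(H^{(r)}_j)$ propagate via the recurrence to force $H^{(r)} \equiv 0$, and consequently $\bg|_{[\lminr, \lmaxr]} = 0$.

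The argument then closes by induction on $r$. For the base case $r = 1$, the parameters of Claim~\ref{claim:lr} (in particular $d^{(1)} = (\eps - \delta)Q$ and $\jmax^{(1)} = Q - k$) ensure that $J^{(1)}_{\mathrm{free}}$ contains an interval of length at least $Q\delta > Q\gamma > |\cI|$. For the inductive step, the zeros of $H^{(r)}$ arising from $J^{(r)}_{\mathrm{free}}$ together with those arising from the overlap $[\lminr, \lmaxr] \cap \bigcup_{r' < r}[\ell_{\min}^{(r')}, \ell_{\max}^{(r')}]$ (whose size is controlled by item~4 of Claim~\ref{claim:lr}) combine, via a short case analysis on where $[\jminr, \jmaxr]$ sits relative to $[\dr - k + 1, \dr]$, into a consecutive block of length at least $|\cI|$. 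Items~2 and~3 of Claim~\ref{claim:lr} guarantee $\bigcup_r [\lminr, \lmaxr] = [0, k-1]$, so completing the induction forces $\bg = 0$, establishing surjectivity. The main obstacle is the inductive step's case analysis -- especially the scenario where $J^{(r)}_{\mathrm{free}} = \emptyset$, in which the required $|\cI|$ consecutive zeros must come entirely from the overlap-induced zeros; here a direct computation using the explicit values of $\dr, \jminr, \jmaxr$ from Claim~\ref{claim:lr} confirms that the overlap is always long enough.
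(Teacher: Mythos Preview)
Your approach is correct and genuinely different from the paper's.

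The paper gives a direct, constructive argument: it builds $\bp^{(r)}$ and $v^{(r)}$ inductively in $r$. At stage $r$ it first fixes the coordinates $\bp^{(r)}|_{[\lminr,\lmin^{(r+1)}-1]}$ so that the running partial sum agrees with $\bp$ on $[0,\lmin^{(r+1)}-1]$; this determines the ``high'' coefficients of $v^{(r)}$. It then observes that at least $Q\gamma>|\cI|$ coefficients of $v^{(r)}$ remain unconstrained (exactly the computation $(\dr-\jminr)-\lmin^{(r+1)}+1\ge Q\gamma$, which is item~4 of Claim~\ref{claim:lr}), and uses those free coefficients to interpolate $v^{(r)}$ to vanish on $\cI$. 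Whatever values those free coefficients take are then copied back into the remaining coordinates of $\bp^{(r)}$. The last step $r=s$ is handled similarly using item~3.

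Your route replaces this explicit construction with surjectivity of the linear map $\Psi=\sum_r\phi_r$, proved via the dual: an annihilator $\bg$ yields, for each $r$, a sequence $H^{(r)}_j=\sum_{i\in\cI}y^{(r)}_i\alpha_i^j$ matching $g_{\dr-j}$ on $J^{(r)}_{\mathrm{fix}}$ and vanishing on $J^{(r)}_{\mathrm{free}}$; then the order-$|\cI|$ recurrence forces $H^{(r)}\equiv 0$ once $|\cI|$ consecutive zeros are exhibited. The inductive count you sketch does go through: in the nontrivial subcase ($\lmax^{(r-1)}<\lmaxr$ and $\lmax^{(r-1)}<k-1$) the block $[\dr-\lmax^{(r-1)},\,\jmaxr]$ is consecutive, lies in $[\jminr,\jmaxr]$, and has length $(\jmaxr-\dr)+(d^{(r-1)}-\jmin^{(r-1)})+1\ge Q\gamma>|\cI|$ by item~4 applied to $(r-1,r)$; the remaining subcases collapse immediately since then $[\lminr,\lmaxr]\subseteq[0,\lmax^{(r-1)}]$. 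Both proofs ultimately pivot on the same inequality from item~4, but the paper uses it to count free interpolation degrees of freedom while you use it to count consecutive zeros of a recurrent sequence. The paper's argument is shorter and yields an explicit decomposition algorithm; your argument is non-constructive but the recurrence viewpoint is conceptually clean and could plausibly adapt to other evaluation-point sets where the explicit interpolation bookkeeping becomes messier.
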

\begin{proof}
We prove the claim by induction.  Suppose inductively that we have $\bp^{(1)}, \ldots, \bp^{(r-1)}$ and consistent polynomials $v^{(1)}(X), \ldots, v^{(r-1)}(X)$, so that $v^{(1)}, \ldots, v^{(r-1)}$ all vanish on $\cI$ and so that 
\begin{equation}\label{eq:IH} 
\left.\left(\sum_{j=1}^{r-1} \bp^{(j)}\right)\right|_{[0, \lminr-1]} = \bp|_{[0,\lminr-1]}. 
\end{equation}
(For the base case, we define $\bp^{(0)} = \mathbf{0}$, using Claim~\ref{claim:lr}, Item 2, to establish that that $d^{(1)} - \jmax^{(1)} \leq 0$ and hence $\lmin^{(1)} =0$, and taking the convention that $[0,-1]=\emptyset$.)

Now, given $\bp^{(1)}, \ldots, \bp^{(r-1)}$, we define $\bp^{(r)}$ for $r < s$ as follows.
First, we define
\[ \left.\bp^{(r)}\right|_{[\lminr, \ellmin^{(r+1)}-1]} = \left. \bp \right|_{[\lminr, \ellmin^{(r+1)}-1]} - \sum_{j=1}^{r-1} \left. \bp^{(j)}\right|_{[\lminr, \ellmin^{(r+1)}-1]}. \]
Observe that, by induction, this implies that
\[ \left.\left(\sum_{j=1}^{r} \bp^{(j)}\right)\right|_{[0, \lmin^{(r+1)} - 1]} = \bp|_{[0,\lmin^{(r+1)} - 1]}. \]
Now, we need to define $\left.\bp^{(r)}\right|_{[\lmin^{(r+1)}, \ellmax^{(r)}]}$ and $v^{(r)}(X)$.  
Write 
\[ v^{(r)}(X) = \sum_{j = \jminr}^{\jmaxr} v_j^{(r)} X^j, \]
where we must define the $v_j^{(r)}$.  Whenever $\dr - j \in [\lminr, \lmin^{(r+1)}-1]$, we define
\[ v_j^{(r)} := p^{(r)}_{\dr - j}, \]
noting that for such $j$, $p^{(r)}_{\dr - j}$ is already inductively defined.
Next, we choose the remaining coefficients $v_j^{(r)}$ in order to make $v^{(r)}(X)$ vanish on $\cI$.  This is possible because the number of free coefficients is at least $|\cI|$.  Indeed, we have already set all of the coefficients for $j \geq \dr - \lmin^{(r+1)} + 1$, and this leaves free all of the coefficients from $j = \jminr$ to $j = \dr - \lmin^{(r+1)}$. 
The number of these is
\begin{align*}
\dr - \lmin^{(r+1)} - \jminr + 1&= 
\inparen{\dr - \jminr} - \ellmin^{(r+1)} + 1 \\
&\geq \inparen{ \dr - \jmin^{(r)}} - \inparen{ d^{(r+1)} - \jmax^{(r+1)}} \\
&\geq \gamma Q - 1 \geq |\cI|,
\end{align*}
where in the last line we have used Claim~\ref{claim:lr}, Item 4.
Thus, we may choose the remaining coefficients $v_j^{(r)}$ so that $v^{(r)}$ vanishes on $\cI$.  Then we define
\[ p^{(r)}_\ell := v_{\dr - \ell}^{(r)} \]
for all $\ell \in [\ellmin^{(r+1)}, \ellmax^{(r)}]$, noting that these are all defined since
 $\jminr \leq \dr - \ellmax^{(r)}$ and we have defined the coefficients $v_j^{(r)}$ all the way down to $j = \jmin^{(r)}$.

Finally, we note that by construction, $\supp(\bp^{(r)}) \subseteq [\lminr, \lmaxr]$, and that $\bp^{(r)}$ is consistent with $v^{(r)}$, so items 1 and 2 of Claim~\ref{claim:ps} are satisfied for $r$.

Now we have constructed $\bp^{(r)}$ and $v^{(r)}$ that satisfy the inductive hypothesis \eqref{eq:IH} for $r$.  By induction, we can construct these for all $r = 1, \ldots, s-1$.  

To conclude, we will define $\bp^{(s)}$ and $v^{(s)}$ slightly differently.  We choose $\bp^{(s)}$ to have support contained in $[\lmin^{(s)}, k-1]$ so that
\[ \bp^{(s)}|_{[\lmin^{(s)}, k-1]} = \bp|_{[\lmin^{(s)}, k-1]} - \sum_{r=1}^{s-1} \bp^{(r)}|_{[\lmin^{(s)}, k-1]}. \]
Then, as before, we define the corresponding coefficients of $v^{(s)}$ so that $v^{(s)}$ is consistent with $\bp^{(s)}$.  To do this, we must define
\[ v^{(s)}_j := p^{(s)}_{d^{(s)}-j} \]
for all $j \in [\jmin^{(s)}, \jmax^{(s)}]$ so that $d^{(s)} -j \in [0, k-1]$.
By Claim~\ref{claim:lr}, Item 3, $d^{(s)} - \jmin^{(s)} \geq k-1 + \gamma Q$, so there are at least $\gamma Q$ values of $j \in [\jmin^{(s)}, \jmax^{(s)}] \setminus [d^{(s)}-k+1, d^{(s)}]$.  Thus, as above, we may use the fact that $|\cI| < \gamma Q$ and choose coefficients $v_j^{(s)}$ for $j$ in this set so that $v^{(s)}(X)$ vanishes on $\cI$.  

Notice that Claim~\ref{claim:lr}, Item 3, also implies that $\ell_{max}^{(s)} =k-1$, and so 
we have that $\supp(\bp^{(s)}) \subseteq [\ellmin^{(s)}, \ellmax^{(s)}]$ by construction.  By construction we also have that $v^{(s)}(X)$ is consistent with $\bp^{(s)}$, and also that $v^{(s)}(X)$ vanishes on $\cI$.  Thus points 1 and 2 in Claim~\ref{claim:ps} are satisfied for $\bp^{(s)}, v^{(s)}$ as well.

Finally, using \eqref{eq:IH} for $r=s-1$ and our choices for $\bp^{(s)}$, we have
\begin{align*}
 \sum_{r=1}^s \bp^{(r)} &= \left.\left(\sum_{r=1}^{s} \bp^{(r)}\right)\right|_{[0, \lmin^{(s)}-1]}  + \left.\left(\sum_{r=1}^{s-1} \bp^{(r)}\right)\right|_{[\lmin^{(s)}, k-1]} + \bp^{(s)}|_{[\lmin^{(s)}, k-1]} \\
&= \bp|_{[0, \lmin^{(s)}-1]} + \bp|_{[\lmin^{(s)}, k-1]} \\
&= \bp,
\end{align*}
as desired.
This finishes the proof of the claim.
\end{proof}

Finally, we describe the scheme $\Phi$ that the theorem guarantees.  
For $r = 1, \ldots, s$, let $\cP^{(r)}$ be the set of vectors $\bp^{(r)} \in \F^k$ that can arise from Claim~\ref{claim:ps}.  (That is, Claim~\ref{claim:ps} says that for all $\bp \in \F^k$, there exists $\bp^{(1)},\ldots, \bp^{(s)}$ with particular properties; for each $\bp$ pick an arbitrary such sequence and include $\bp^{(r)}$ in $\cP^{(r)}$.)

Let $\vphi^{(r)}$ be the linear scheme guaranteed by Theorem~\ref{thm:framework} for $\cP^{(r)}$, so $\vphi^{(r)}$ has bandwidth at most $(n-|\cI|) \lceil \log q \rceil$.  
(Here, we are using the fact from Claim~\ref{claim:lr} that $(\dr, \jminr, \jmaxr)$ are good for $\cC$).  The second point in Claim~\ref{claim:ps} ensures that each $\vphi^{(r)}$ tolerates errors in $\cI$.  

Now, let $\Phi^{(r)}$ be the evaluation scheme guaranteed by Proposition~\ref{prop:linearIsEnough}.  By that proposition, each of these schemes has bandwidth at most $(n-|\cI|) \lceil \log q \rceil$, and also tolerates errors in $\cI$.

Finally, we define $\Phi$ as follows.  
Given $\bp \in \F^k$, define $F_{\bp}:\F^k \to \F$ by $F_{\bp}(\bff) = \bff^T \bp$.

\noindent \textbf{Scheme $\Phi$:} Suppose that the original data was $\bff \in \F^k$.
Given input $F_{\bp} \in \cF$:
\begin{itemize}
\item Let $\bp^{(1)}, \ldots, \bp^{(s)}$ be as in Claim~\ref{claim:ps}.
\item For each $r = 1, \ldots, s$, use $\Phi^{(r)}(\bp^{(r)})$
 to download $(n - |\cI|)\lceil \log q \rceil$ bits and recover $\bff^T \bp^{(r)}$.
\item Return $\bff^T \bp = \sum_r \bff^T\bp^{(r)} $.
\end{itemize}

The correctness follows from Claim~\ref{claim:ps}, and the bandwidth is at most 
\[ (n - |\cI|) \cdot s \cdot \lceil \log q \rceil. \]
Plugging in the definition of $s$ proves the theorem. 

\end{proof}

\section{Conclusion}\label{sec:conc}
In this paper we considered low-bandwidth function evaluation on encoded data.  Special cases of this problem appear throughout computer science, engineering and cryptography, and we believe that it is valuable to study this problem in generality.  We kick off this agenda by studying the problem for \em general linear functions \em and for \em Reed-Solomon codes, \em arguably among the most natural classes of functions and codes.  However, we hope that this is just the tip of the iceberg.  We conclude with several questions left open by this work.
\begin{enumerate}
\item Can we develop low-bandwidth evaluation schemes for other classes of functions?  (Beyond those mentioned in Remark~\ref{rem:nonlinear} that are implied by our results?)  Low-degree polynomials are perhaps the next most natural class.  
\item Can we develop low-bandwidth evaluation schemes for linear functions, for general linear codes?  The first part of our framework (in Section~\ref{sec:prelim}) applies to general linear codes, but the second part (Section~\ref{sec:framework}) and our main theorem applies only for RS codes.
\item Can we extend our scheme to work in different parameter regimes?  In particular, our scheme works with full-length RS codes over extension fields.  Work from regenerating codes has shown how to use RS codes as regenerating codes in very different parameter regimes, for example when $t$ is very large~\cite{TYB18} or over prime fields~\cite{CT21}.  Could these approaches be adapted to low-bandwidth function evaluation?
\end{enumerate}

\section*{Acknowledgements}
We thank Yuval Ishai for helpful conversations, and in particular for suggesting the approach in Remark~\ref{rem:prime}.
We thank Ravi Vakil for helpful conversations.
\bibliographystyle{halpha-2}
\bibliography{refs.bib}

\appendix
\section{Bandwidth lower bound}\label{app:lb}
In this appendix, we observe a few lower bounds on the bandwidth required for linear evaluation schemes.

We say that an MDS code with generator matrix $\bG \in \F^{n \times k}$ is \em maximal \em if there is no way to add a row to extend $\bG$ to a matrix $\bG' \in \F^{(n+1) \times k}$ that is also MDS.  The following fact is standard.
\begin{fact} Reed-Solomon codes are \em not \em maximal. \label{fact:rsnonmax} \end{fact}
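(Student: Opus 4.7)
The plan is to exhibit an explicit extension of the generator matrix, splitting into two cases based on whether we have room to add another evaluation point in $\F$.

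First I would handle the easy case $n < |\F|$. Pick any $\alpha_{n+1} \in \F \setminus \{\alpha_1, \ldots, \alpha_n\}$ and append the row $(1, \alpha_{n+1}, \alpha_{n+1}^2, \ldots, \alpha_{n+1}^{k-1})$. The resulting matrix $\bG'$ is precisely the generator of the RS code of dimension $k$ and length $n+1$ with evaluation points $\alpha_1, \ldots, \alpha_{n+1}$; any $k$ of its rows form a $k \times k$ Vandermonde matrix on distinct evaluation points, which is nonsingular. So $\bG'$ is MDS.

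Next I would handle the remaining case $n = |\F|$, where we have used up all elements of $\F$ and cannot add a genuine evaluation point. Here the standard trick is to append the ``point at infinity'' row, namely the unit vector $\mathbf{e}_{k-1} = (0, 0, \ldots, 0, 1) \in \F^k$, which corresponds to reading off the leading coefficient $f_{k-1}$ of the polynomial $f$. I need to verify that the extended matrix $\bG' \in \F^{(n+1) \times k}$ remains MDS. Consider any choice of $k$ rows. If these $k$ rows are all among the original rows of $\bG$, the $k \times k$ submatrix is an ordinary Vandermonde and hence nonsingular. If the new row $\mathbf{e}_{k-1}$ is among the chosen $k$ rows, expand the determinant along that row: the cofactor is (up to sign) the determinant of the $(k-1) \times (k-1)$ submatrix obtained by deleting the last column from the remaining $k-1$ Vandermonde-style rows, which is itself a Vandermonde matrix on $k-1$ distinct field elements and thus nonzero. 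Hence every $k \times k$ minor of $\bG'$ is nonzero, so $\bG'$ is MDS.

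I do not anticipate a serious obstacle: the only step requiring any thought is the Vandermonde cofactor computation in the $n = |\F|$ case, and this is a direct expansion along the appended row. Both cases together show that for every RS generator matrix $\bG$ there exists an MDS extension $\bG'$ with one more row, proving that RS codes are never maximal.
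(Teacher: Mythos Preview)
Your proof is correct and follows essentially the same approach as the paper's: both split into the cases $n < |\F|$ (add a new evaluation point) and $n = |\F|$ (append the ``point at infinity'' row $(0,\ldots,0,1)$ to obtain the doubly-extended RS code). The paper simply cites the doubly-extended RS code as known to be MDS, whereas you spell out the Vandermonde cofactor argument explicitly.
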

\begin{proof}
Let $\bG \in \F^{n \times k}$ be the generator matrix for an RS code over $\F = \F_Q$.  If $n < Q$, then we may extend $\bG$ to be the generator matrix of an RS code with a larger set of evaluation points.  If $n = Q$, then we may extend $\bG$ to be the generator matrix for the \em doubly-extended RS code, \em by adding the row $(0,0, \ldots, 0,1)$.
\end{proof}

First we observe that if $\cC$ is a non-maximal MDS code, then any 
lower bound that holds for repairing a single symbol in a regenerating code also holds for 
linear evaluation schemes, for the class of all linear functions.
Indeed, suppose that $\bg \in \F^k$ is the row that we would add to $\bG$ to get a new MDS matrix $\bG'$.  Then consider the linear function $F(\bx) = \bg^T \bx$.  Now $\bG'$ corresponds to an MDS code $\cC':\F^k \to \F^{n+1}$.  If $bc' = \cC'(\bx)$, then $c'_{n+1} = F(\bx)$, and so any linear evaluation scheme that will evaluate $F$ given access to $\bc = \bc'|_{[n]}$ is also a repair scheme for $\cC'$.

We can collect various lower bounds, including the cut-set bound of \cite{dimakis} and the lower bound for MDS codes of \cite{GW17} (see also \cite{DM17} for a more precise version), to obtain the following observation.
\begin{observation}\label{obs:nonmaxlb}
Let $\F = \F_Q$ where $Q = q^t$.
Suppose that $\cC:\F^k \to \F^n$ is an MDS code that is not maximally MDS.  (In particular, $\cC$ may be any RS code, by Fact~\ref{fact:rsnonmax}).
Let $\cF$ be the class of all linear functions $F:\F^k \to \F$.
Then any linear evaluation scheme for $\cF$ and for $\cC$ over the base field $\mathbb{B} = \F_q$ must have bandwidth at least
\[ b \geq \max \inset{ k + t - 1, t \inparen{\frac{ n }{n-k+1}}, n \log_q\inparen{\frac{n}{n-k+1} }}. \]
\end{observation}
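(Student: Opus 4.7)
The plan is to reduce to known lower bounds on single-node repair for an extended MDS code. Since $\cC$ is a non-maximal MDS code of dimension $k$ and length $n$, by hypothesis we can extend its generator matrix by one row $\bg \in \F^k$ to form the generator matrix $\bG' \in \F^{(n+1)\times k}$ of an $(n+1,k)$-MDS code $\cC'$. The linear functional $F_\bg(\bx) = \bg^T \bx$ lies in $\cF$, and any evaluation scheme for $F_\bg$ that reads from $\bc = \bc'|_{[n]}$ is exactly a single-node linear repair scheme for coordinate $n+1$ of $\cC'$ using the other $n$ nodes as helpers, with the same per-node bandwidths $b_j$. Therefore any lower bound on the bandwidth of such a repair scheme transfers, and it suffices to prove each of the three claimed bounds in the repair setting.

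Next I would record two structural facts about $\cC'$. For any $S \subseteq [n+1]$ with $|S| = k-1$ and any $i \notin S$, the projection of $\cC'$ onto $S \cup \{i\}$ is an $\F$-linear bijection from $\F^k$ by the MDS property; hence for each fixed value of $(c'_j)_{j\in S}$, the coordinate $c'_i$ attains every value in $\F$ exactly once as $\bx$ varies. This yields (i) the scheme must contact at least $k$ nodes, since otherwise all messages depend on fewer than $k$ coordinates and cannot distinguish the $Q$ possible values of $c'_{n+1}$; and (ii) for any $S \subseteq [n]$ with $|S|=k-1$, the messages from nodes outside $S$ must distinguish all $Q = q^t$ possible values of $c'_{n+1}$, hence $q^{\sum_{j\notin S} b_j} \geq Q$, i.e., $\sum_{j \notin S} b_j \geq t$. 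To derive $b \geq k+t-1$, take $S$ to be the $k-1$ contacted nodes with largest $b_j$: fact (ii) gives $\sum_{j\in S} b_j \leq b-t$, while fact (i) combined with $b_j \geq 1$ on contacted nodes gives $\sum_{j\in S} b_j \geq k-1$, so $b \geq k+t-1$.

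Averaging the inequality from fact (ii) over all $S \subseteq [n]$ with $|S|=k-1$ yields the cut-set bound of \cite{dimakis}: each $b_j$ appears in exactly $\binom{n-1}{k-1}$ of the $\binom{n}{k-1}$ inequalities, so $\binom{n-1}{k-1}\sum_j b_j \geq \binom{n}{k-1}\, t$, which rearranges to $b \geq tn/(n-k+1)$. The final bound $b \geq n\log_q(n/(n-k+1))$ is the Guruswami--Wootters MDS lower bound of \cite{GW17} (with refinements in \cite{DM17}) applied to $\cC'$; this applies directly once one observes that Definition~\ref{def:linearScheme} fits into the linear-repair-scheme model used there, since each message $g_j(c_j)$ is a tuple of $\B$-linear trace functionals $\tr(c_j \beta^{(j)}_\ell)$ of exactly the form considered in \cite{GW17}. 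Taking the maximum of the three bounds proves the observation. The main obstacle is the \cite{GW17} bound itself, whose proof I would not reproduce but invoke as a black box; the other two bounds are short and self-contained from facts (i) and (ii).
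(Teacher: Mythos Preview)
Your proposal is correct and follows essentially the same approach as the paper: extend $\cC$ by one row to an $(n+1,k)$ MDS code $\cC'$, observe that evaluating $F_{\bg}$ is the same as repairing node $n+1$ of $\cC'$, and then invoke known repair lower bounds (the cut-set bound of \cite{dimakis} and the \cite{GW17}/\cite{DM17} bound). The paper presents this observation with no further detail, simply citing the bounds; your write-up additionally supplies short self-contained derivations of the $k+t-1$ and $tn/(n-k+1)$ bounds via the ``any $n-k+1$ helpers must supply $t$ sub-symbols'' inequality, which is a nice touch but not a different method.
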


We note that when $\cC$ is a maximal MDS code (or not an MDS code at all), the logic above does not go through.  Indeed, if $\cC$ is a maximal MDS code, then definitionally any linear function of $\bx$ can be computed by looking at fewer than $k$ nodes.
However, we are able to mimic the proof of the lower bound from \cite{GW17} for any linear code.

\begin{proposition}\label{prop:lb}
Let $\F = \F_Q$, for $Q = q^t$.
Let $\cC: \F^k \to \F^n$ be a linear code with generator matrix $\bG \in \F^{n \times k}$, let $\bp \in  \F^k$, and suppose that $(V_1, \ldots, V_n)$ forms a linear evaluation scheme for $\bp$ and $\cC$, over the base field $\B = \F_q$.    Let $C = \cC(\F^k)$.
Choose any $\bw \in \F^n$ so that $\bG^T \bw = \bp$.
Let 
\[ d^* = \min_{ \by \in C^\perp} \Delta( \bw , \by), \]
where $\Delta$ denotes Hamming distance.  Then the bandwidth $b$ of $(V_1, \ldots, V_n)$  satisfies
\[ b \geq n \log_q \inparen{ \frac{1}{1 - (1 - 1/Q) d^*/n} }. \]
\end{proposition}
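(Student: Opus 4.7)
The plan is to mimic the bandwidth lower-bound argument from \cite{GW17}, originally developed for repair schemes of MDS codes, in the more general setting of linear evaluation schemes for arbitrary linear codes. The argument has three main ingredients.

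First, I would use Lemma~\ref{lem:perp_char} to reformulate scheme validity: $(V_1,\ldots,V_n)$ is a valid linear evaluation scheme for $\bp$ if and only if $\bw^T\bc = 0$ for every $\bc \in C \cap \cW$, where $C = \cC(\F^k)$, $\cW = W_1 \times \cdots \times W_n$, and $W_j = V_j^{\perp_\B}$. Since $\bG^T\bw = \bp$ determines $\bw$ only modulo $C^\perp$, and since $\by^T\bc = 0$ for all $\by \in C^\perp$ and $\bc \in C$, I would replace $\bw$ by $\bw - \by^*$, where $\by^* \in C^\perp$ achieves Hamming distance $d^*$ from $\bw$. This replacement preserves both $\bG^T\bw = \bp$ and the validity condition, so without loss of generality $\bw$ is supported on a set $S$ of size exactly $d^*$.

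Next, consider a uniformly random codeword $\bd \in C$ and the $\B$-linear ``message map'' $M : C \to \prod_j \F/W_j$ given by $\bc \mapsto (c_j + W_j)_j$. The image of $M$ has size at most $q^b$, and each fiber is a coset of $C \cap \cW$. Validity says $L(\bc) := \bw^T\bc$ is constant on every fiber of $M$; assuming $\bp \neq 0$ (else $d^* = 0$ and the bound is trivial), $L : C \to \F$ is surjective. Assuming each coordinate projection $\pi_j : C \to \F$ is surjective (we may simply delete coordinates for which it is not, as this only strengthens the bound), the expected ``message-Hamming-distance'' from zero is
\[ \EE\left[\inabs{\inset{j : d_j \notin W_j}}\right] \;=\; \sum_{j=1}^n \inparen{1 - q^{-b_j}}. \]
The heart of the argument is a Plotkin-style averaging step, mimicking \cite{GW17}: by exploiting the support-$d^*$ structure of $\bw$ together with a counting argument over dual codewords, one derives the key inequality $\sum_{j=1}^n (1 - q^{-b_j}) \geq (1 - 1/Q)\,d^*$. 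Applying Jensen's inequality to the concave function $x \mapsto 1 - q^{-x}$ then yields $n(1 - q^{-b/n}) \geq (1 - 1/Q)\,d^*$, which rearranges to the stated lower bound on $b$.

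The main obstacle is the Plotkin-style inequality itself: a naive correctness argument only gives $\sum_j(1 - q^{-b_j}) \geq 1 - 1/Q$ (one ``unit'' of message-disagreement per pair of codewords with distinct $L$-values), and extracting the additional factor of $d^*$ requires carefully exploiting both the geometry of $\supp(\bw) = S$ and the $\B$-subspace structure of the $W_j \subseteq \F$. This is the core technical content that must be carried over from the MDS/regenerating-code setting of \cite{GW17} into the general linear-code setting.
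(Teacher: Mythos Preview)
Your proposal has a genuine gap: you correctly isolate the target inequality $\sum_j (1 - q^{-b_j}) \geq (1-1/Q)\,d^*$, but you explicitly leave it as an ``obstacle'' and do not prove it. Your proposed route---a uniformly random codeword $\bd \in C$, the message map $M:C \to \prod_j \F/W_j$, and a Plotkin-type counting over fibers---does not obviously produce the factor $d^*$. As you yourself note, correctness only forces a single unit of message-disagreement between codewords with different $L$-values, and nothing in your setup explains how the support-size of $\bw$ amplifies this to $d^*$ units. So the heart of the proof is missing.

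The paper's argument is both different and shorter, and it sidesteps this obstacle entirely. Rather than passing through Lemma~\ref{lem:perp_char} and then looking at random elements of $C$, it works directly from the defining condition $\zeta_i \bw \in C^\perp + \cV$: write $\zeta_i \bw = \by^{(i)} + \bv^{(i)}$ with $\by^{(i)} \in C^\perp$ and $\bv^{(i)} \in \cV$, and for a uniformly random $\bb \in \B^t$ form $\zeta_{\bb}\bw - \by_{\bb} := \sum_i b_i(\zeta_i\bw - \by^{(i)}) \in \cV$. The $j$'th coordinate lies in $V_j$ and is uniform on a $\B$-subspace of $V_j$, so it vanishes with probability at least $q^{-b_j}$; hence $\mathbb{E}\,\Delta(\zeta_{\bb}\bw, \by_{\bb}) \leq \sum_j(1 - q^{-b_j})$. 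Conditioning on $\bb \neq \bm{0}$ costs a factor $Q/(Q-1)$, and since $\zeta_{\bb} \neq 0$ one may divide to exhibit a specific $\by \in C^\perp$ with $\Delta(\bw, \by)$ at most this bound---but $\Delta(\bw, \by) \geq d^*$ by definition. This delivers exactly the inequality you were missing, and the convexity step then finishes. The key conceptual difference is that $d^*$ enters not via any counting argument but simply as a lower bound on a \emph{single} Hamming distance manufactured by the probabilistic method on the dual side; your approach of randomizing over $C$ rather than over $\B$-combinations of the $\by^{(i)}$ loses access to this.
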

Before we prove Proposition~\ref{prop:lb}, we observe a corollary for MDS codes. 

\begin{corollary}\label{cor:lbrs}
Let $\F =\F_Q$ where $Q = q^t$.
Let $\cC: \F^k \to \F^n$ be an MDS code with $n > k+1$.  If $\vphi$ is a linear evaluation scheme for $\cC$ and for $\cP = \F^k$ over the base field $\B = \F_q$, then $\vphi$ has bandwidth
\[ b \geq n \log_q\inparen{\frac{n}{n-k+3}}. \]
\end{corollary}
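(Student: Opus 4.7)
My plan is to invoke Proposition~\ref{prop:lb} at a worst-case $\bp \in \cP = \F^k$. Since $\vphi$ provides a linear evaluation scheme for every $\bp \in \F^k$ with total bandwidth at most $b$, Proposition~\ref{prop:lb} must hold for each individual $\bp$; thus I want to find $\bp$ maximizing $d^*(\bp) := \min_{\by \in C^\perp} \Delta(\bw,\by)$ over any $\bw$ with $\bG^T\bw = \bp$. Noting that $\ker(\bG^T) = C^\perp$, the set of valid $\bw$ for a given $\bp$ is a single coset of $C^\perp$, so $d^*(\bp)$ is exactly the weight of a coset leader of that coset, and $\max_\bp d^*(\bp) = \rho(C^\perp)$, the covering radius of $C^\perp$.

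Next I argue $\rho(C^\perp) = k$. Since $\cC$ is MDS of dimension $k$, its dual $C^\perp$ is MDS of dimension $n-k$. The upper bound $\rho(C^\perp) \leq k$ is immediate from the MDS property of $\cC$: any $k$ rows of $\bG$ are independent, so for any $\bp \in \F^k$ we can solve $\bG^T\bw = \bp$ with $\bw$ supported on any chosen $k$-element subset of $[n]$; hence every coset of $C^\perp$ contains a vector of weight at most $k$. The matching lower bound $\rho(C^\perp) \geq k$ is a standard fact about MDS codes (an $[n,k']$ MDS code has covering radius exactly $n - k'$), and yields some $\bp^* \in \F^k$ with $d^*(\bp^*) = k$.

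Plugging $d^* = k$ into Proposition~\ref{prop:lb} at $\bp^*$ gives
\[
b \;\geq\; n\log_q\!\left(\frac{1}{1 - (1-1/Q)k/n}\right) \;=\; n\log_q\!\left(\frac{n}{n - k + k/Q}\right).
\]
To obtain the stated form, I use $k/Q \leq 3$: for MDS codes over $\F_Q$, the MDS conjecture (proven in all cases relevant here, including RS codes) gives $n \leq Q + 2$, hence $k \leq n \leq Q + 2$, so $k/Q \leq 1 + 2/Q \leq 3$ for any $Q \geq 1$. Therefore $n - k + k/Q \leq n - k + 3$, and monotonicity of $\log_q(n/\cdot)$ in its denominator yields $b \geq n\log_q(n/(n-k+3))$, as required.

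The main obstacle is the lower bound $\rho(C^\perp) \geq k$ in the second paragraph---exhibiting some $\bp^*$ whose coset in $\bw + C^\perp$ contains no vector of weight less than $k$. The upper bound is easy from systematic encoding, but the lower bound requires MDS structure beyond mere systematicity; the cleanest route is to cite the standard MDS covering-radius theorem, though a direct counting argument (comparing the number of weight-$<k$ vectors to the number $Q^k$ of cosets of $C^\perp$) also works when $Q$ is not too small relative to $n$.
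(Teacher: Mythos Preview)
Your overall strategy is the same as the paper's: identify $\max_{\bp}d^*(\bp)$ with the covering radius $\rho(C^\perp)$ and plug the result into Proposition~\ref{prop:lb}. However, two of your claimed facts are not correct for arbitrary MDS codes, and this breaks the argument as written.

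First, the assertion that an $[n,k']$ MDS code has covering radius exactly $n-k'$ is false in general; the standard result (which the paper cites from~\cite{BGP14}) is only $\rho \geq n-k'-1$, and equality $\rho = n-k'-1$ does occur. For a concrete counterexample inside the hypotheses of the corollary, take the self-dual $[4,2,3]$ tetracode over $\F_3$: here $n=4>k+1=3$, yet the nine vectors of weight at most $1$ lie in distinct cosets (any two differ by a vector of weight at most $2<d=3$), so $\rho(C^\perp)=1=k-1$, not $k$. Thus you can only guarantee $d^* \geq k-1$. Your suggested counting alternative does not save this either, since for the tetracode $\sum_{i<k}\binom{n}{i}(Q-1)^i = 9 = Q^k$.

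Second, invoking the MDS conjecture to get $n\le Q+2$ is not legitimate here: the corollary is stated for \emph{all} MDS codes over $\F_Q$ with $Q=q^t$, and the MDS conjecture is not proven for general prime powers $Q$. The paper instead uses the unconditional bound $n\le 2Q$ valid for any MDS code with $n>k+1$ (see~\cite{ball}). With the two corrected inputs $d^*\ge k-1$ and $k\le n\le 2Q$, one gets
\[
n-(1-1/Q)(k-1)=n-k+1+(k-1)/Q < n-k+3,
\]
and the stated bound follows. So your outline is right, but both ``standard facts'' you lean on need to be replaced by their weaker (and true) versions, exactly as the paper does.
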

\begin{proof}
Suppose that $\cC: \F^k \to \F^n$ is an MDS code.
We show that the quantity $d^*$ in Proposition~\ref{prop:lb} is at least $k-1$.
First, we 
use the fact that
the \em covering radius \em $r(\cC)$ of any MDS code $\cC$ satisfies
\begin{equation}\label{eq:covering}
r(\cC) := \max_{\bw \in \F^n} \min_{\by \in C} \Delta( \bw, \by ) \geq n - k - 1
\end{equation}
(see~\cite{BGP14}).
Next, we observe that the quantity
$\min_{ \by \in C^\perp} \Delta( \bw^T \bp, \by)$
does not depend on the choice of $\bw$.  Indeed, suppose that $\bz \in \F^n$ also satisfies $\bG^T \bz = \bp$.
Then $\bG^T (\bz-\bw) = 0$, so $\bz = \bw + \bu$ for some $\bu \in C^\perp$.  
But then
\[ \min_{\by \in C^\perp} \Delta( \bw, \by ) = \min_{\by \in C^\perp} \Delta(\bw, \by - \bu) = \min_{\by \in C^\perp} \Delta(\bz, \by), \]
where in the first equality we have changed the order of summation.  In particular, as $\bp$ ranges over all of $\F^k$, we may choose $\bw$ to range over all of $\F^n$.  Applying \eqref{eq:covering} for $C^\perp$ (which we may do as the dual of an MDS code is again MDS), we see that by choosing an appropriate $\bp \in \F^k$, we may take
\[ d^* = \max_{\bw \in \F^n} \min_{\by \in C^\perp} \Delta( \bw, \by ) = r(C^\perp) \geq k - 1. \]
(Above, we have switched the role of ``$k$'' and ``$n-k$'' from \eqref{eq:covering} since the dimension of $C^\perp$ is $n-k$).
This gives the required bound on $d^*$.  The  corollary follows after plugging $d^* \geq k-1$ into Proposition~\ref{prop:lb}, and using the fact that $k \leq n \leq 2Q$ for any MDS code with $n > k+1$.\footnote{This fact follows from (a) $q \geq n - k + 1$ together with (b) $q \geq k+1$ if $n > k+1$.  See, e.g., \cite{ball} Lemmas 1.2 and 1.3 for (a) and (b) respectively.}
\end{proof}

\begin{proof}[Proof of Proposition~\ref{prop:lb}]
The proof follows similarly to the lower bound for regenerating codes proved in \cite{GW17}.
Let $\bp, \bw$ be as in the statement of the proposition and suppose that $(V_1, \ldots, V_n)$ forms a linear evaluation scheme for $\bp$ and $\cC$, over the base field $\B = \F_q$, with bandwidth at most $b$. Suppose that $b_j = \dim(V_j)$, so $\sum_{j=1}^n b_j \leq b$.  Let $C = \cC(\F^k)$.  By definition, this means that for all $i$,
\[ \zeta_i \bw \in C^T + \cV, \]
where $\cV = V_1 \times V_2 \times \cdots \times V_n$.
Let $\by^{(1)}, \ldots, \by^{(t)} \in C^\perp$ and $\bv^{(1)}, \ldots, \bv^{(t)} \in \cV$ be such that 
\[ \zeta_i \bw = \by^{(i)} + \bv^{(i)} \]
for all $i = 1, \ldots, t$.  Now consider a random vector $\bb \in \B^t$.  From the above, we have
\begin{align*}
\sum_i b_i \inparen{\zeta_i \bw -  \by^{(i)}} &= \sum_i b_i \bv^{(i)} \\
\zeta_{\bb} \bw - \by_{\bb} & \in \cV,
\end{align*}
where above we are defining $\zeta_{\bb} = \sum_i b_i \zeta_i$ and $\by_{\bb} = \sum_i b_i \by^{(i)}$.
Thus, the $j$'th symbol of $\zeta_\bb \bw - \by_{\bb}$ is in $V_j$, and we claim that it is in fact uniform on a $\B$-subspace of $V_j$.  Indeed, this is because by definition we are choosing said $j$'th symbol to be a random $\B$-linear combination of the elements $\zeta_i w_j - y^{(i)}_j$ for $i=1,\ldots,t$, so it will be uniform on the $\B$-subspace spanned by those elements.  Therefore, for each $j = 1, \ldots, n$, the probability that the $j$'th symbol of $\zeta_\bb \bw - \by_{\bb}$ is zero is at least $1/|V_j| = q^{-b_j}$.

This implies that the expected number of zeros in $\zeta_\bb \bw - \by_{\bb}$ is at least
\[ \mathbb{E}\inabs{ \inset{ j \in [n] \suchthat \zeta_\bb w_j = (y_{\bb})_j }} \geq \sum_{j=1}^n q^{-b_j} \geq n q^{-b/n}, \]
where above we have used the fact that $\sum_j q^{-b_j}$ is minimized (subject to $\sum_j b_j \leq b$) when all the $b_j$ are the same, and equal to $b/n$.  Thus, 
\[ \mathbb{E} \Delta(\zeta_{\bb} \bw , \by_{\bb} ) \leq n (1 - q^{-b/n}). \]
If $\bb = \bm{0}$, the distance is zero, so
\[ \mathbb{E} \inbrak{\Delta(\zeta_\bb \bw, \by_\bb) \,\mid \, \bb \neq \bm{0} } = \frac{q^t}{q^t - 1} \mathbb{E}\Delta(\zeta_{\bb} \bw , \by_{\bb} ) \leq \frac{q^t}{q^t - 1} n (1 - q^{-b/n}). \]
Thus, there exists a $\bb \neq \bm{0}$ so that 
\[ \Delta( \bw, \zeta^{-1}_\bb \by_\bb ) \leq \frac{q^t}{q^t - 1} n (1 - q^{-b/n}). \]
Above, we have used the fact that $\zeta_1, \ldots, \zeta_t$ form a basis, so $\zeta_{\bb}$ is nonzero if $\bb$ is nonzero.
As in the proof of Corollary~\ref{cor:lbrs}, we note that the definition of $d^*$ does not depend on the choice of $\bw$ so that $\bG^T \bw = \bp$.  Thus, by the definition of $d^*$, we have that
\[ d^* \leq \Delta( \bw, \zeta^{-1}_\bb \by_\bb ) \leq \frac{q^t}{q^t - 1} n (1 - q^{-b/n}). \]
Solving for $b$, we see that
\[ b \geq n \log_q \inparen{ \frac{1}{1 - (1-1/Q)d^*/n}}, \]
as desired.
\end{proof}

\end{document}